\newcolumntype{C}{>{\Centering\arraybackslash}X}
\newtheorem{definition}{Definition}
\title{FedRW: Efficient Privacy-Preserving Data Reweighting for Enhancing Federated Learning of Language Models}
\author{
  \textbf{Pukang Ye}\thanks{\faEnvelopeO~Main Contact:~\texttt{51275902028@stu.ecnu.edu.cn}}$~~^{\diamondsuit}$~~~~
  \textbf{Junwei Luo}$^{\clubsuit}$~~~~     \\
  \textbf{Jiachen Shen}$^{\diamondsuit}$~~~~ 
  \textbf{Saipan Zhou}$^{\diamondsuit}$~~~~    
  \textbf{Shangmin Dou}$^{\vartriangle}$~~~~
  \textbf{Zhenfu Cao}$^{\diamondsuit}$~~~~
  \textbf{Hanzhe Yao}$^{\diamondsuit}$~~~~      \\
  \textbf{Xiaolei Dong}\thanks{Correspondence authors.}$~~^{\diamondsuit}$~~~~ 
  \textbf{Yunbo Yang}$^{\dagger\diamondsuit\blacktriangledown\blacksquare}$ \\
  $^{\diamondsuit}$East China Normal University\quad
  $^{\clubsuit}$Wuhan University\quad
  $^{\blacktriangledown}$Zhejiang University\quad 
  $^{\vartriangle}$ZStack\quad  \\
  $^{\blacksquare}$Hangzhou High-Tech Zone (Binjiang) Institute of Blockchain and Data Security
  }
\begin{document}

\maketitle

\begin{abstract}
Data duplication within large-scale corpora often impedes large language models' (LLMs) performance and privacy. In privacy-concerned federated learning scenarios, conventional deduplication methods typically rely on trusted third parties to perform uniform deletion, risking loss of informative samples while introducing privacy vulnerabilities. To address these gaps, we propose Federated ReWeighting (FedRW), the first privacy-preserving framework, to the best of our knowledge, that performs soft deduplication via sample reweighting instead of deletion in federated LLM training, without assuming a trusted third party. At its core, FedRW proposes a secure, frequency-aware reweighting protocol through secure multi-party computation, coupled with a parallel orchestration strategy to ensure efficiency and scalability. During training, FedRW utilizes an adaptive reweighting mechanism with global sample frequencies to adjust individual loss contributions, effectively improving generalization and robustness. Empirical results demonstrate that FedRW outperforms the state-of-the-art method by achieving up to $28.78\times$ speedup in preprocessing and approximately $11.42$\% improvement in perplexity, while offering enhanced security guarantees. FedRW thus establishes a new paradigm for managing duplication in federated LLM training.
\end{abstract}

\begin{figure}[H]
    \centering
        \includegraphics[width=0.85\linewidth, keepaspectratio]{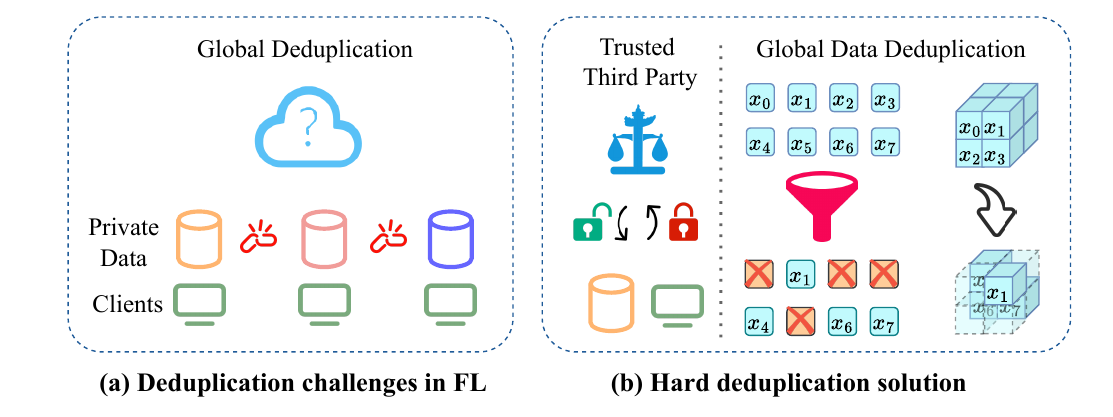}
    \caption{Deduplication in Federated Learning (FL). (a) Challenges of global deduplication in decentralized settings: privacy constraints prohibit direct data sharing. (b) State-of-the-art solution utilizing hard deduplication over encrypted data, requiring a trusted third party.}
    \label{fig:teaser}
\end{figure}

\section{Introduction}\label{sec:introduction}

Large language models (LLMs)~\cite{ouyang2022training,touvron2023llama,hanna2023does,achiam2023gpt,liu2024deepseek} have driven remarkable progress across a wide range of applications~\cite{kojima2022large,wu2023bloomberggpt,wang2023voyager,luo2024skysensegpt}. However, their performance fundamentally depends on data quality, yet real-world corpora often suffer from noise, bias, and especially redundancy.
Among these issues, duplicated sequences are particularly widespread in large text datasets~\cite{raffel2020exploring,lee2021deduplicating}, weakening generalization and encouraging memorization. This not only hinders downstream performance but also increases vulnerability to privacy attacks such as model inversion, prompt injection, and membership inference~\cite{carlini2022quantifying,shayegani2023survey,qi2023model,mattern2023membership}.
As a result, data deduplication has become a standard preprocessing step in training pipelines. Existing techniques fall into two categories: hard deduplication, which removes duplicates via exact or fuzzy matching (e.g., suffix arrays, MinHash)~\cite{penedo2023refinedweb,broder1997resemblance}; and soft deduplication, which reweights samples to preserve dataset integrity and avoid brittle thresholding~\cite{lin2017focal,ren2018learning,xie2023doremi,xie2023data,he2024softdedup}. 

Meanwhile, the growing scarcity of high-quality public data and rising concerns over data privacy~\cite{hou2024pre} have brought federated learning (FL)~\cite{mcmahan2017communication} to the forefront as a compelling alternative for LLM training. By enabling collaborative learning across decentralized clients without local data sharing, FL naturally supports privacy preservation and improved utilization of high-value private data. 
Yet, FL introduces unique challenges for deduplication, presented in Figure~\ref{fig:teaser}(a). Unlike centralized settings, global redundancy across clients cannot be directly resolved due to privacy constraints. A fundamental dilemma emerges: local deduplication fails to detect inter-client duplicates, while global mechanisms cannot bypass privacy silos, leaving redundancy unresolved in federated settings. 

Abadi et al.~\cite{abadi2024privacy}'s EP-MPD represents the most state-of-the-art work for federated hard deduplication, a robust cryptographic framework built on group private set intersection~\cite{kamara2014scaling}, as illustrated in Figure~\ref{fig:teaser}(b). Nonetheless, key challenges remain unresolved: (1) strict removal of samples may discard informative or domain-specific content beneficial to model training; (2) multi-round key agreement and encryption introduce significant computational and communication overhead; and (3) reliance on a trusted third party for both encryption and duplicate counting reduces feasibility in stricter privacy settings.

To address the issues mentioned above, we propose Federated ReWeighting (FedRW), to the best of our knowledge, the first framework that enables privacy-preserving soft deduplication in federated LLM training without relying on any trusted third party.  Unlike state-of-the-art method that discards duplicated samples, FedRW pioneers a new paradigm of secure, frequency-aware sample reweighting, enabling fine-grained control over sample redundancy while ensuring strict privacy guarantees. At the core of FedRW lies a novel protocol, Privacy-Preserving Multi-Party Reweighting (PPMPR), which securely identifies global duplication patterns across clients through a series of lightweight, third-party-free two-party interactions. To ensure scalability, we further introduce a parallel orchestration strategy that organizes the pairwise interactions into a hierarchical schedule, significantly reducing protocol complexity. Comprehensive experiments demonstrate that FedRW improves both preprocessing efficiency and model generalization, particularly in data-scarce and resource-constrained federated settings. In summary, the key contributions are:
\begin{itemize}
\item \textbf{FedRW Framework.} Duplicate or overly frequent samples in federated LLM training lead to inefficiency and privacy leakage, especially when deletion-based solutions are impractical. To the best of our knowledge, we propose FedRW, the first framework to achieve privacy-preserving soft deduplication in federated LLM training. Unlike hard deletion methods, FedRW introduces secure, frequency-aware sample reweighting, establishing a new paradigm that bridges privacy protection and data-centric optimization.
\item \textbf{PPMPR Protocol.} We design PPMPR, a secure protocol for global frequency estimation without relying on a trusted third party. To scale to practical settings, we further introduce a parallel orchestration strategy that reduces the total protocol complexity from $O(n^2)$ to $O(2^{\lceil \log_2 n \rceil})$, achieving $17.61\textit{-}28.78\times$ acceleration on large datasets and $4.09\textit{-}28.78\times$ speedup in preprocessing when scaled to 50 parties. 
\item \textbf{Experimental Evaluation.} We conduct extensive empirical studies across diverse datasets and model configurations. By adaptive reweighting, FedRW yields approximately $11.42\%$ perplexity reduction over the baseline, with particularly enhanced robustness under data-scarce and resource-constrained federated settings, where hard deduplication methods often exhibit apparent limitations.
\end{itemize}

\section{Related Work}\label{sec:related-work}
This section reviews data deduplication, categorizing centralized and distributed approaches. We emphasize the limitations in distributed settings, which motivates our proposed FedRW framework.

\paragraph{Centralized Deduplication.}Centralized deduplication is crucial for large text corpora, which often contain substantial exact or near-exact samples~\cite{raffel2020exploring,lee2021deduplicating} that degrade model performance and compromise privacy~\cite{lee2021deduplicating,shayegani2023survey,qi2023model,mattern2023membership,abadi2024privacy}. Techniques for exact matching commonly include suffix arrays~\cite{penedo2023refinedweb, manber1993suffix}, while fuzzy matching typically employs MinHash for syntactic similarity~\cite{lee2021deduplicating, penedo2023refinedweb, broder1997resemblance}. Semantic duplication can be identified using pretrained reference models~\cite{xie2023doremi,he2024softdedup,  coleman2019selection}.  

Instead of removing duplicates, soft deduplication methods reweight training data to mitigate redundancy while preserving the integrity and valuable diversity of datasets. For instance, RedPajama-Data-v2~\cite{weber2024redpajama} leverages over 40 quality metrics for systematic filtering and reweighting. DoReMi~\cite{xie2023doremi} derives domain-specific weights estimated by a proxy model. Methods like SoftDedup~\cite{he2024softdedup} and DSIR~\cite{xie2023data} quantify sample commonness or importance via n-grams. DrICL~\cite{zhang2025more} uses differentiated learning and cumulative advantages for dynamic reweighting. RHO-1~\cite{lin2024not} employs token-level scoring with Selective Language Modeling. However, these centralized strategies are not directly applicable to privacy-concerned FL environments, which effectively leverage high-quality private data.

\paragraph{Distributed Deduplication.}Deduplication in FL faces unique challenges due to privacy constraints and data silos. Existing work DupLESS~\cite{keelveedhi2013dupless} proposes encrypted deduplication using a dual-server architecture, one for encryption key derivation and one for ciphertext deduplication. The state-of-the-art, EP-MPD~\cite{abadi2024privacy}, introduces a group private set intersection framework built on symmetric-key encryption~\cite{kamara2014scaling} and oblivious pseudorandom functions~\cite{rindal2021vole}, but still relies on a trusted third party. Critically, these methods focus solely on hard deduplication, neglecting the benefits of reweighting strategies that better preserve data utility and potentially enhance model performance.

These limitations highlight the need for a decentralized soft deduplication solution that ensures privacy without relying on trusted third parties. To this end, we propose an efficient, secure, and third-party-free reweighting framework for federated LLM training, delivering enhanced scalability, performance, and robustness while also ensuring stronger privacy guarantees.

\section{Preliminaries}\label{sec:preliminary}

\paragraph{Causal Language Models.}Causal language models are autoregressive architectures that estimate the joint probability of a token sequence by expressing it into a chain of conditional probabilities: 
\begin{equation}
    P(x_1, x_2, \dots, x_n) = \prod_{i=1}^{n} P(x_i \mid x_{<i}),
\end{equation}
where $P(x_i \mid x_{<i})$ is probability of token $x_i$ given its historical context $x_{<i}$. Model training minimizes the cross-entropy loss to maximize the likelihood of contextually consistent sequences:
\begin{equation}
\mathcal{L} = -\frac{1}{n} \sum_{i=1}^{n} \log P\left(x_i \mid x_{<i}; \theta\right),
\end{equation}
where $n$ is the sequence length and $\theta$ the model parameters. Perplexity is the standard evaluation metric, calculated as the exponentiated average negative log-likelihood over the sequence: 
\begin{equation}
\text{Perplexity} = \exp\left( -\frac{1}{n} \sum_{i=1}^{n} \log P(x_i \mid x_{<i})\right).
\end{equation} 
Lower perplexity signifies reduced prediction uncertainty and better data distribution alignment.

\paragraph{Security Definition.} In cryptographic protocol design, the ideal functionality $f$ models the desired behavior of a protocol in an idealized setting. It serves as a trusted third party that collects inputs from all parties, performs the computation securely, and returns the outputs. A protocol is considered secure if its real-world execution is computationally indistinguishable from the ideal execution with $f$. Due to space constraints, formal definitions are deferred to Appendix~\ref{apdx:security}.

\section{Framework}\label{sec:framework}
This section details the design and implementation of FedRW. We start by formalizing the PPMPR protocol, followed by a practical construction using cryptographic primitives and a parallel orchestration acceleration strategy for efficiency and scalability. Finally, we describe the integration of the derived weights into the FL training pipeline. An overview of key stages of the FedRW framework is illustrated in Figure~\ref{fig:overview}. 

\begin{figure}[!ht]
    \centering
        \includegraphics[width=1\linewidth, keepaspectratio]{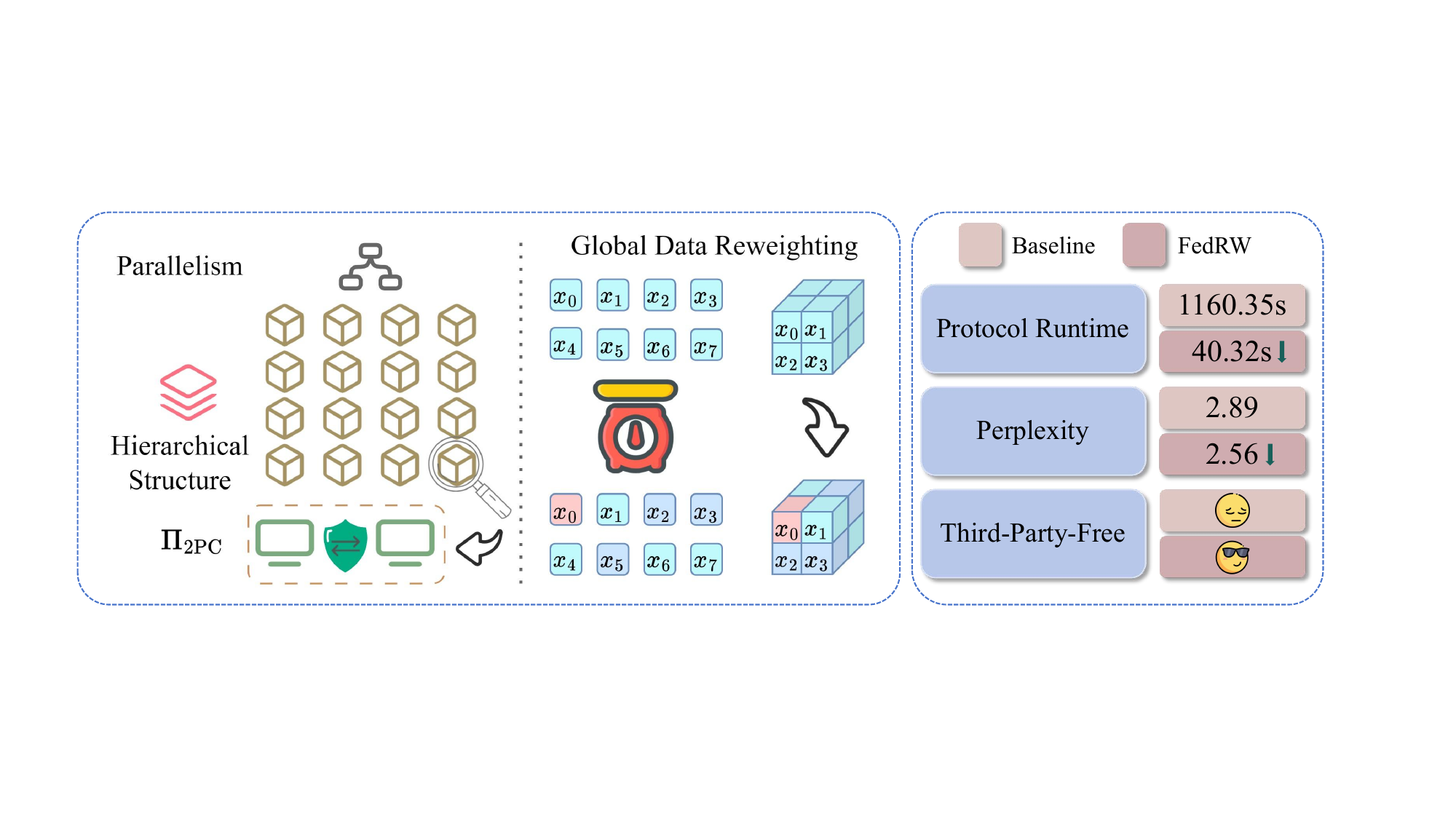}
    \caption{FedRW Framework: Parallel $\Pi_\text{2PC}$-based Reweighting for Efficient FL. The overview is divided into three parts: \textbf{(Left)} The parallel orchestration of the third-party-free $\Pi_\text{2PC}$ protocol. \textbf{(Center)} The frequency-aware reweighting scheme that dynamically assigns weights (reflected by color) to samples while preserving data integrity. \textbf{(Right)} A comparison between FedRW and the baseline approach.}
    \label{fig:overview}
\end{figure}

\subsection{Formal definition of PPMPR}

Consider a federated setting with $n$ clients $P_1, \dots, P_n$, where each client $P_i$ holds a local dataset $X_i = [x^i_1, \dots, x^i_{m_i}]$ consisting of $m_i$ text samples. The objective of our proposed PPMPR protocol is to assign a weight to each sample in a privacy-preserving manner, based on how often it appears across all datasets. This functionality $f_{\text{PPMPR}}$ can be formally defined as:
\begin{equation}
    f_{\text{PPMPR}}(X_1,...,X_n) \rightarrow (W_1, ..., W_n), 
\end{equation}
where $W_i = [w^i_1, \dots, w^i_{m_i}]$ refers to the weight vector for the samples in $X_i$. Specifically, each sample $x^i_j$ is associated with an individualized weight $w^i_j$ reflecting its global frequency.

Subsequently, the derived weights are applied to enhance the federated training of LLMs, providing a fine-grained pattern to handle duplicated data. To quantify the relative informativeness of each sample $x$,  we employ an intuitive yet effective heuristic: the weight $w(x)$ is inversely proportional to its global frequency: \begin{equation}
            w(x) \propto \frac{1}{freq_{global}(x)}. \label{eq:freq_global}
\end{equation}
Here, $freq_{global}(x)$ denotes the occurrence frequency of the sample $x$ within the entire dataset, specifically, the concatenation of all clients' local datasets. This formulation naturally turns the reweighting task into a challenge of securely deriving global frequencies without revealing local data. To solve this problem, we leverage a secure multi-party computation (MPC) approach. 

To avoid reliance on a trusted third party, the procedure is decomposed into multiple rounds of secure two-party computation (2PC), a sub-issue of MPC. In a 2PC protocol, two clients, $P_i$ and $P_j$, jointly compute a specific function based on their private inputs, $X_i$ and $ X_j$, without directly disclosing the inputs to each other. The functionality $f_{\text{2PC}}$ defined for this situation is:
\begin{equation}
    f_{\text{2PC}}(X_i, X_j) \rightarrow (\vec{C_i}, \vec{C_j}),   \label{eq:2pc}
\end{equation}
where $\vec{C_i}$ is vector of length $m_i$, containing the counts of samples in $X_j$ that are identical to each sample $x$ in $X_i$. Since there are no privacy concerns client-side, each unique sample $x$ will be maintained only once in $X_i$, along with its local frequency, $freq_{X_i}(x)$, which can be easily collected and securely shared with other clients that hold the same sample. Through this iterative pairwise 2PC protocol, each client computes and obtains the global frequency for its local samples, allowing them to adjust the sample weights without exposing private data. 

\subsection{Efficient construction of PPMPR}

To realize the defined functionalities, we utilize two-party private set intersection (PSI) as the cryptographic foundation of our 2PC protocol. PSI enables two parties to compute the intersecting elements of their datasets without revealing any additional information beyond the agreed-upon rules. The protocol involves only the two participating parties as sender and receiver. In the semi-honest setting, the protocol reveals solely the shared samples and how often they appear in each local dataset, as specified by $f_{\text{2PC}}$. The detailed procedure is outlined in Protocol~\ref{tab:protocol-2pc}.

\begin{table}[!ht]
\centering
\caption{Protocol $\Pi_{\text{2PC}}$ in the semi-honest setting model}
\label{tab:protocol-2pc}
\begin{tabularx}{\textwidth}{@{} l >{\RaggedRight\arraybackslash}X @{}}
\toprule
\multicolumn{2}{@{}l@{}}{\textbf{Protocol 1} Two-Party Computation (2PC)} \\ \midrule

\textbf{Input:} & 
Client $P_1$ holds input $X_1=\{x^1_1,...,x^1_{m_1}\}$, and client $P_2$ holds input $X_2 = \{x^2_1,...,x^2_{m_2}\}$. Both input sets are preprocessed local data samples. \\
\textbf{Output:} & 
$P_1$ outputs $\vec{C_1}$, and $P_2$ outputs $\vec{C_2}$, as defined in Eq.~\eqref{eq:2pc}.\\ 
\textbf{Protocol:} & \noindent\begin{minipage}[t]{\hsize}  
\begin{enumerate}[left=0pt, start=1, label=\arabic*., align=left, leftmargin=*]  
    \item $P_1$ and $P_2$ initiate a two-party Private Set Intersection (PSI) protocol, where:
    \begin{itemize}[left=0pt, nosep]  
        \item $P_1$ acts as sender, and receives nothing.
        \item $P_2$ acts as receiver, and receives the intersection set $\mathcal{I}$ of $P_1$'s data.
    \end{itemize}
    \item For each sample $x$ in $\mathcal{I}$, $P_2$ extracts the local frequency $freq_{X_2}(x)$, and creates the frequency set $\mathcal{F}_2$. $P_2$ then sends $\mathcal{I}$ and $\mathcal{F}_2$ to $P_1$.
    \item Upon receiving $\mathcal{I}$ and $\mathcal{F}_2$, $P_1$ extracts the local frequency $freq_{X_1}(x)$, and creates the frequency set $\mathcal{F}_1$. $P_1$ then sends $\mathcal{F}_1$ to $P_2$.
    \item $P_1$ outputs $\vec{C_1}$ = [$freq_{X_2}(x^1_1),\dots,freq_{X_2}(x^1_{m_1})$], and $P_2$ outputs $\vec{C_2}$ = [$freq_{X_1}(x^2_1), \dots,freq_{X_1}(x^2_{m_2})$].

\end{enumerate}
\end{minipage} \\\bottomrule
\end{tabularx}
\end{table}
The 2PC protocol provides an efficient and secure method for pairwise exchange of sample frequencies between clients. We now extend this building block to construct the full PPMPR protocol.
\begin{table}[H]
\centering
\caption{Protocol $\Pi_{\text{PPMPR}}$ in the semi-honest setting model}
\label{tab:protocol-mpc}
\begin{tabularx}{\textwidth}{@{} l >{\RaggedRight\arraybackslash}X @{}}
\toprule
\multicolumn{2}{@{}l@{}}{\textbf{Protocol 2} Full Protocol (PPMPR)} \\ \midrule
\textbf{Input:} & 
Each client $P_i$ holds a local dataset $X_i = \{x^i_1, \dots, x^i_{m_i}\}$, where $i \in \{1, \dots, n\}$. All datasets are preprocessed.  \\
\textbf{Output:} & 
Each $P_i$ outputs a frequency vector $\vec{\mathcal{C}_i}$ containing $freq_{global}(x)$ for every $x$ in $X_i$, as defined in Eq.~\eqref{eq:freq_global}.\\ 
\textbf{Protocol:} & \noindent\begin{minipage}[t]{\hsize}  
\begin{enumerate}[left=0pt, start=1, label=\arabic*., align=left, leftmargin=*]  
    \item Each $P_i$ initialize $\vec{\mathcal{C}_i}$ using its local frequencies $freq_{X_i}(x)$ for all $x$ in $X_i$.
    \item Each $P_i$ performs $\Pi_{2PC}$ once with every other client $P_j$ (all $n-1$ of them). 
        \begin{itemize}[left=0pt, nosep]  
        \item After each run, $P_i$ outputs $\vec{C_i}$ and updates its global vector: $\vec{\mathcal{C}}_i \leftarrow \vec{\mathcal{C}}_i + \vec{C}_i$.
    \end{itemize}
    \item After $n-1$ rounds, $P_i$ outputs the final $\vec{\mathcal{C}_i}$.
\end{enumerate}
\end{minipage} \\\bottomrule
\end{tabularx}
\end{table}

As presented in Protocol~\ref{tab:protocol-mpc}, Each client $P_i$ starts by initializing its frequency vector with local counts, then iteratively executes $\Pi_{\text{2PC}}$ with every other clients to progressively build $\vec{\mathcal{C}_i}$, the vector of global frequencies. The formal security definitions and proofs are available in Appendix~\ref{apdx:security}.

\subsection{Parallel Acceleration} \label{subsec:parallel}
The formula "N choose 2"  represents that the full protocol involves each pair of the $n$ clients performing $\Pi_{\text{2PC}}$, which results in $\binom{n}{2}$ executions, leading to an overall time complexity of $O(n^2)$ when run sequentially.  This quickly becomes inefficient as a growing number of clients. To address this scalability bottleneck, we introduce a parallel orchestration strategy that reorganizes the execution schedule to minimize overall runtime. We start with a toy example where $n=8$ in Figure~\ref{fig:parallel}, and the detailed procedure is provided in Appendix~\ref{apdx:parallel}.
\begin{figure}[h!]
    \centering
        \includegraphics[width=0.6\linewidth, keepaspectratio]{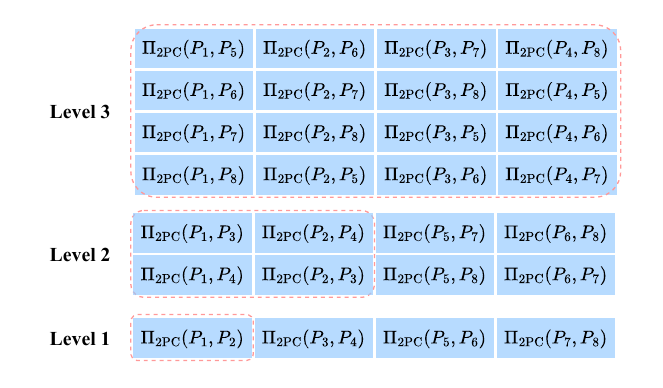}
    \caption{A toy example for the parallel orchestration when $n=8$.}
    \label{fig:parallel}
\end{figure}

The key insight is that multiple $\Pi_{\text{2PC}}$ instances can be performed concurrently, provided their participating sets do not overlap. As shown in Figure~\ref{fig:parallel}, from the left-hand side of level 1, adjacent pairs of clients perform $\Pi_{\text{2PC}}$ independently. At the next level, these client pairs are grouped into disjoint blocks (e.g., $\{P_1, P_2\}$ with $\{P_3,P_4\}$), and inter-block protocols are executed in parallel. This hierarchical process forms progressively larger blocks, such as $\{P_1, P_2, P_3, P_4\}$ and $\{P_5,P_6,P_7,P_8\}$ at level 3. The structure resembles a binary tree and can be viewed as a recursive two-way merge that manages all $\binom{n}{2}$ sub-protocols efficiently.

To organize this orchestration, the client pairings at each level are structured into pairing matrices, with partial examples highlighted in the \textcolor[RGB]{255,153,153}{dashed-line} areas of Figure~\ref{fig:parallel}. When $n$ is a power of two, these matrices perfectly arrange all $\Pi_{\text{2PC}}$ executions, maximizing parallelism. Each matrix is constructed by element-wise pairing of two client blocks. For instance, at level 3, matrix $\mathcal{M}_3$ is formed as follows: 
\begin{equation}
\begin{split}
    & \vec{a} :=(1,2,3,4), \quad \vec{b} := (5,6,7,8) \\
    & \vec{b'} \leftarrow \texttt{RotL}(\vec{b}, 0), \quad row_1 \leftarrow \{(\vec{a_i},\vec{b'_i}) | i=1,2,3,4\} \\
    & \vec{b'} \leftarrow \texttt{RotL}(\vec{b}, 1), \quad row_2 \leftarrow \{(\vec{a_i},\vec{b'_i}) | i=1,2,3,4\} \\
    & \vec{b'} \leftarrow \texttt{RotL}(\vec{b}, 2), \quad row_3 \leftarrow \{(\vec{a_i},\vec{b'_i}) | i=1,2,3,4\} \\
    & \vec{b'} \leftarrow \texttt{RotL}(\vec{b}, 3), \quad row_4 \leftarrow \{(\vec{a_i},\vec{b'_i}) | i=1,2,3,4\} \\
\end{split}
\end{equation}
Here, $\vec{a}$ and $\vec{b}$ contain the indices of clients from interacting blocks. In each step, $\vec{b}'$ is generated by cyclically left-shifting $\vec{b}$ by $k$ positions using $\texttt{RotL}(\vec{b}, k)$. Client pairs are then formed by matching elements from $\vec{a_i}$ and $\vec{b'_i}$, allowing $\Pi_{\text{2PC}}$ to run concurrently across each row. For $2^{m-1} < n \leq 2^m$, the hierarchical structure remains valid by simply ignoring the unused blocks, thus maintaining optimality and full coverage of client interactions. This parallel approach reduces the total runtime complexity of the full protocol from $O(n^2)$ to $O(2^{\lceil \log_2 n \rceil}-1)$.

\subsection{Enhanced Training} \label{subsec:train}

To integrate duplication awareness into model optimization, FedRW employs a frequency-based sample reweighting strategy. Given the global frequency vector $\vec{\mathcal{C}}$, where each element represents the occurrence count of a local sample across all clients, the corresponding weight vector $\vec{\mathcal{W}}$ is defined as:
\begin{equation}
    \vec{\mathcal{W}} = \frac{1}{\ln(\vec{\mathcal{C}}+\vec{1}) + \vec{\varepsilon}}   \label{weight}
\end{equation}
Here, $\varepsilon$ is a small constant for numerical stability. This formula penalizes frequent samples using a logarithmic function, reducing their impact on optimization without complete exclusion. The logarithm, shifted by $1$, ensures that the weights decrease moderately and prevents extreme weights for infrequent samples (e.g., when $\vec{\mathcal{C}_i} = 1$). Compared to linear or hard-threshold formulas, this scheme offers a smoother and adaptive adjustment across varying duplication levels, leveraging the observation that moderate redundancy can promote better model generalization. 

These derived weights, $\vec{\mathcal{W}}$, are then applied during training via a sample-wise reweighted loss. Instead of modifying the model architecture, each sample’s loss contribution is rescaled by its assigned weight. For a batch of $B$ samples, with $\vec{\mathcal{W}}_i$ as the weight and $\ell_i^{(t)}$ as the token-level average loss of the $i$-th sample, the aggregated batch loss is calculated as:
\begin{equation} 
\mathcal{L}_{\text{batch}} = \frac{\sum_{i=1}^{B} \vec{\mathcal{W}}_i \cdot \ell_i^{(t)}}{\sum_{i=1}^{B} \vec{\mathcal{W}}_i} \label{weighted_loss} 
\end{equation}
This method diminishes the impact of frequent samples while balancing the influence of less frequent or underrepresented ones. By adapting to statistical redundancy across clients, it preserves informative samples and mitigates overfitting to specific patterns. This provides a lightweight yet effective sample-level reweighting mechanism, particularly advantageous in federated settings with skewed or redundant data. Model updates are then aggregated using the standard FedAvg~\cite{mcmahan2017communication} algorithm.

\section{Experiments}\label{sec:experiment}

\subsection{Experimental Settings}
\paragraph{Environments.} For protocol evaluation, we implement the $\Pi_{\text{2PC}}$ prototype based on~\cite{rindal2021vole} and benchmark its runtime under varying configurations. For FL experiments, we use eight public datasets: \textit{Haiku}~\cite{haiku_dataset}, \textit{Rotten Tomatoes}~\cite{rotten_tomatoes}, \textit{Short Jokes}~\cite{short_jokes}, \textit{Poetry}~\cite{huggingface_poetry}, \textit{IMDB}~\cite{maas2011learning}, \textit{Sonnets}~\cite{shakespeare_sonnets}, \textit{Plays}\cite{tiny_shakespeare}, and \textit{Twitter Sentiment Analysis}\cite{twitter}. To simulate redundancy, duplicates are synthetically added into the training set at different rates and distributed uniformly across 10 clients. The final performance of models is evaluated using perplexity on the test sets. More details can be found in Appendix~\ref{apdx:exp-detail}.

\paragraph{Baseline Setting.} 
We choose EP-MPD~\cite{abadi2024privacy} as the primary baseline, the most state-of-the-art hard deduplication solution for federated LLM training via a trusted third party. We follow their original experimental settings and directly use their reported runtime and perplexity results for comparison.

\subsection{Main Results}   \label{subsec:main results}
\paragraph{Preprocessing.}
This part evaluates the efficiency and scalability of our proposed PPMPR protocol against the baseline across three key factors: \textbf{dataset size},  \textbf{client number}, and \textbf{duplication percentage}, with the results shown in~\cref{tab:table-2pc,tab:table-dup}, and figure~\ref{fig:runtime}.
\begin{table}[H]
\centering
\caption{Effect of dataset size with $30\%$ duplication percentage on $\Pi_{\text{2PC}}$ running time.}
\label{tab:table-2pc}
\begin{adjustbox}{max width=\textwidth}
\begin{tabular}{@{}ccccccc@{}}
\toprule
\multirow{1}{*}{\textbf{Method}} & \multicolumn{6}{c}{\textbf{Protocol Running Time (ms)}}\\ \cmidrule(l){2-7} 
                \textbf{Dataset Size}    & $2^{10}$ & $2^{12}$ & $2^{14}$ & $2^{16}$ & $2^{18}$ & $2^{20}$ \\
 \cmidrule{1-7}
Setup& $47.0_{\pm0.002}$& $48.6_{\pm0.003}$& $54.6_{\pm0.078}$& $76.0_{\pm0.178}$& $167.8_{\pm0.478}$&$715.8_{\pm1.841}$\\
 Execution& $0.4_{\pm0.006}$& $1.0_{\pm0.006}$& $5.9_{\pm0.019}$& $23.5_{\pm0.325}$& $118.7_{\pm1.738}$&$713.3_{\pm7.600}$\\

$\Pi_{\text{2PC}}$-total&  $47.4_{\pm0.055}$&  $49.7_{\pm0.118}$&  $60.9_{\pm0.423}$&  $100.8_{\pm1.500}$&   $291.8_{\pm6.250}$& $1451.8_{\pm27.141}$\\ \bottomrule
\end{tabular}
\end{adjustbox}
\end{table}

The runtime of the basic 2PC protocol increases with dataset size due to the underlying frequency counting mechanism. For small datasets (e.g., $2^{10}\text{-}2^{14}$), runtime differences are minimal, mainly because the cryptographic setup overhead of two-party PSI is significant compared with the actual execution time, which grows linearly with the dataset size. Noticeably, the execution time scales rapidly beyond a certain dataset size, and begins to dominate the total runtime. For instance, processing $2^{20}$ samples per client takes approximately $1.45$ seconds, as illustrated in table~\ref{tab:table-2pc}. 

Table~\ref{tab:table-dup} examines how the duplication percentage affects $\Pi_{\text{2PC}}$ runtime when each client holds $2^{19}$ samples. The results show a negligible effect, with the protocol maintaining near-constant performance even at extreme duplication levels. For instance, with 90\% duplication, the runtime remains stable at $0.666$ seconds, differing by only $6.9$\% from the $0.620$-second runtime with 10\% duplication. These small variations are attributed to the increased amount of \textit{frequency information exchange} ($\mathcal{F}_1$, $\mathcal{F}_2$ in~\ref{tab:table-2pc}) as the intersection ($\mathcal{I}$  in~\ref{tab:table-2pc}) cardinality  grows.
\begin{table}[H]
\centering
\caption{Effect of duplication percentage with $2^{19}$ data size in each client on $\Pi_{\text{2PC}}$ running time.}
\label{tab:table-dup}
\begin{adjustbox}{max width=\textwidth}
\begin{tabular}{@{}cccccc}
\toprule
\multirow{1}{*}{\textbf{Method}} & \multicolumn{5}{c}{\textbf{Protocol Running Time (ms)}}\\ \cmidrule(l){2-6} 
                \textbf{Duplication Percentage}& $10\%$ & $30\%$ & $50\%$ & $70\%$ & $90\%$ 
\\
 \cmidrule{1-6}
Setup& $342.2_{\pm1.092}$& $322.4_{\pm0.966}$
& $337.4_{\pm0.988}$& $339.8_{\pm1.015}$& $343.4_{\pm0.974}$\\
 Execution& $265.9_{\pm0.259}$& $293.3_{\pm3.456}$
& $284.9_{\pm5.905}$& $304.2_{\pm9.283}$& $310.6_{\pm11.622}$\\

$\Pi_{\text{2PC}}$-total&   $620.1_{\pm12.213}$&   $626.9_{\pm13.091}$&  $633.6_{\pm14.766}$&   $656.5_{\pm18.170}$&  $665.9_{\pm18.913}$\\ \bottomrule
\end{tabular}
\end{adjustbox}
\end{table}

Figure~\ref{fig:runtime} analyzes the effect of client number on the runtime of the baseline and PPMPR under consistent experimental settings. The baseline proposes two variants that trade off performance and leakage, and we utilize $\text{EP-MPD}^{\text{(I)}}$, which prioritizes efficiency while introducing more privacy leakage. PPMPR demonstrates superior efficiency and scalability, achieving a $17.61\times$ to $28.78$ speedup over the baseline with $10$ to $50$ clients. This advantage is primarily due to the efficient $\Pi_{\text{2PC}}$ protocol as its basic component, with the parallel orchestration strategy, which reduces the overall computational complexity to $O(m-1)$ for $n \in (2^{m-1}, 2^m]$ clients.
\begin{figure}[H]
    \centering
        \includegraphics[width=1\linewidth, keepaspectratio]{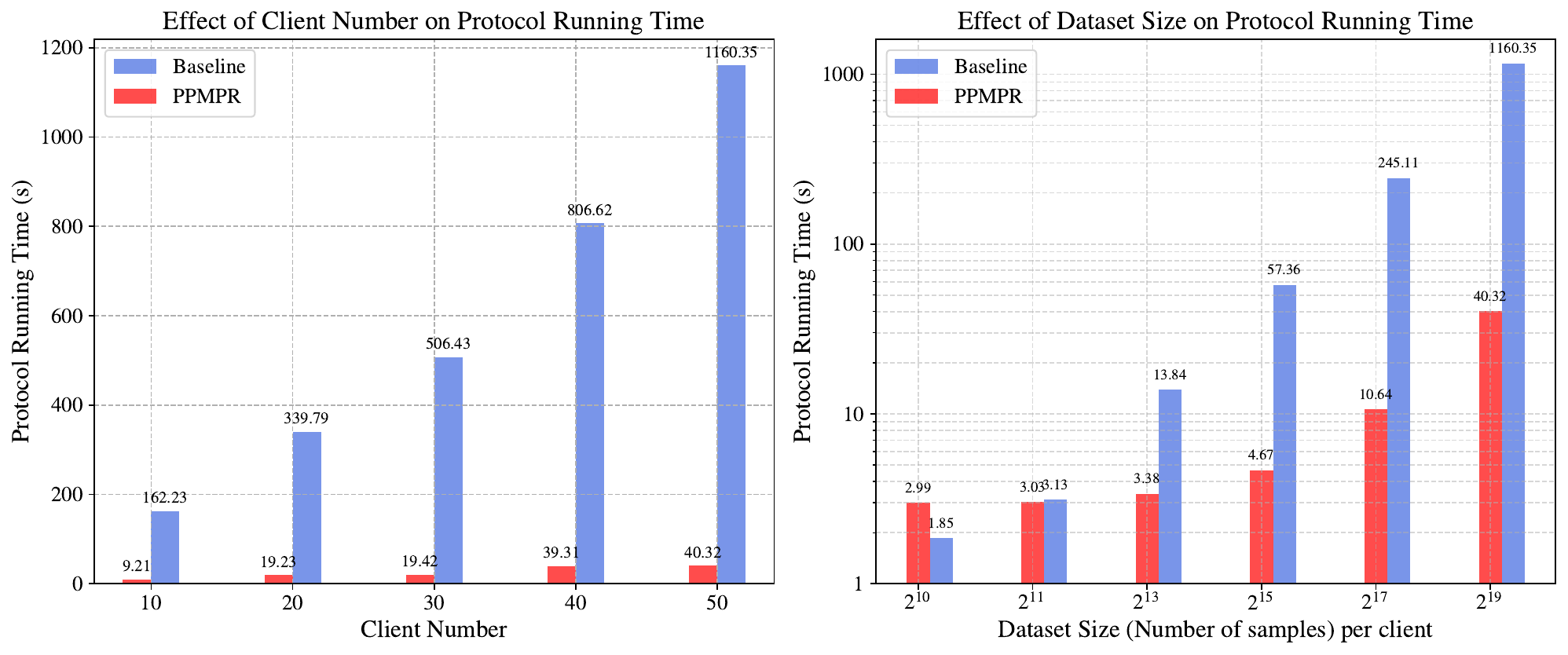}
    \caption{We evaluate the effect of client number and dataset size on protocol running time. For clients $(10\textit{-}50)$ with $2^{19}$ data per client and 30\% duplication, PPMPR exhibits $17.61\textit{-}28.78\times$ speedup. For 50 clients, PPMPR outperforms the baseline by $4.09\textit{-}28.78\times$ with increasing dataset size.}
    \label{fig:runtime}
\end{figure}

Furthermore, we evaluate the impact of dataset size per client with 50 clients. While PPMPR initially lags $\text{EP-MPD}^{\text{(I)}}$ on smaller datasets, its parallel strategy quickly becomes dominant as data size scales. With $2^{17}$ samples per client, PPMPR achieves a $23.04\times$ speedup. This dual advantage in scaling across both client counts and data volumes positions PPMPR as a highly efficient and scalable solution for real-world federated environments.

\paragraph{Model Performance.}
This section evaluates model performance across eight text datasets with diverse linguistic structures. To simulate realistic data redundancy in FL, we introduce different levels of artificial duplication (10\%, 20\%, and 30\%) into the training data. Initially, we assess the robustness of each method under two foundational models utilized in the baseline, GPT-2 Large~\cite{gpt2large} and DistilGPT2~\cite{distilgpt2}, with perplexity as the evaluation metric. 

\begin{table}[H]
\centering
\caption{Model perplexity ($\downarrow$) on test set under various duplication settings with GPT-2 Large}
\label{tab:table-large}
\setlength{\tabcolsep}{5.5pt}  
\begin{adjustbox}{max width=\textwidth, center}  
\begin{tabular}{@{}ccccccccccccc@{}}
\toprule
\multirow{1}{*}{\textbf{Method}} & \multicolumn{12}{c}{\textbf{Dataset}}                                                                                                    \\ \cmidrule(l){2-13} 
                            \textbf{Duplication}     & \multicolumn{3}{c}{Haiku} & \multicolumn{3}{c}{Rotten Tomatoes} & \multicolumn{3}{c}{Short Jokes} & \multicolumn{3}{c}{Sonnets} \\
                            \textbf{Percentage}     & 30\%    & 20\%   & 10\%   & 30\%       & 20\%       & 10\%      & 30\%      & 20\%     & 10\%     & 30\%    & 20\%    & 10\%    \\ \midrule
Raw Data                         & 3.26        & 3.25        & 2.98       & 2.65           & 2.61           & 2.53          & 4.11          & 4.03         & 3.94          &  4.39& 4.34& 4.31\\
Baseline& 2.89        & -       & -       & 2.21& -           & -          &           3.79&  -        & -         &  4.35& -        &  -       \\
FedRW (Ours)& \textbf{2.56}& \textbf{2.67}& \textbf{2.69}&  \textbf{1.61}& \textbf{1.63}& \textbf{1.64}&           \textbf{3.15}&          \textbf{3.17}&          \textbf{3.17}&         \textbf{4.07}&         \textbf{4.26}&         \textbf{4.26}\\ \bottomrule
\end{tabular}
\end{adjustbox}
\end{table}

As detailed in Table~\ref{tab:table-large}, FedRW consistently outperforms the baseline across all datasets and duplication levels with GPT-2 Large. The improvement is evident on the highly structured \textit{Sonnets} and \textit{Haiku} datasets, where FedRW achieves relative perplexity reductions of up to 6.44\% and 11.42\% at 30\% duplication, respectively. The strict structures of these datasets likely worsen the negative impact of redundancy, highlighting FedRW's ability to preserve content diversity and reduce overfitting through adaptive reweighting.

Similar trends are observed on less structured datasets. For \textit{Short Jokes}, FedRW reduces perplexity from 3.79 to 3.15 under 30\% duplication, despite its high lexical diversity. Likewise, on \textit{Rotten Tomatoes}, which is composed of short, opinion-based reviews often prone to duplication, perplexity decreases from 2.21 to 1.61. These results indicate FedRW's effectiveness even when redundancy arises from stylistic repetition.

Furthermore, FedRW exhibits robustness to varying duplication rates. While the baseline's hard filtering yields fixed perplexity (10\% to 30\% duplication), FedRW maintains stable or slightly improved performance. For instance, perplexity on \textit{Short Jokes} remains constant at 3.17, and on \textit{Haiku}, it decreases from 2.69 to 2.56. These observations align with prior research suggesting that controlled repetition can enhance generalization by reinforcing key training patterns~\cite{muennighoff2023scaling}. Instead of discarding duplicates, FedRW adaptively reweights updates to retain informative redundancy, as seen in datasets where increased duplication slightly improves performance. This suggests that effectively managed redundancy can amplify useful linguistic or semantic signals, underscoring FedRW's ability to adapt to varying levels of data noise.

\begin{table}[H]
\caption{Model perplexity ($\downarrow$) on test set under $30\%$ duplication percentage with DistilGPT2}
\label{tab:table-distil}
\centering
\begin{tabular}{@{}cccccccc@{}}
\toprule
\multirow{2}{*}{\textbf{Method}} & \multicolumn{7}{c}{\textbf{Dataset}}                                    \\ \cmidrule(l){2-8} 
                                 & Haiku & Short Jokes& Rotten Tomatoes  & IMDB & Poetry & Sonnets & Plays \\ \midrule
Raw Data                         &       3.70&             \textbf{2.07}&                 1.78&     7.17&   2.84&               5.87&       15.07\\
Baseline&       3.67&             \textbf{2.07}&                 1.77&       7.25& 3.01&               6.08&       16.09\\
FedRW (Ours)&       \textbf{3.65}&             2.08&                 \textbf{1.75}&       \textbf{7.00}&  \textbf{2.66}&              \textbf{5.75}&       \textbf{14.50}\\ \bottomrule
\end{tabular}
\end{table}
To evaluate FedRW's generalizability in resource-limited scenarios, we evaluate it with DistilGPT2, a smaller version of GPT-2 suitable for FL with limited computational resources. Despite its reduced size, which makes it more vulnerable to the negative effects of data duplication, Table~\ref{tab:table-distil} shows that FedRW consistently maintains or slightly improves performance across various datasets.

On datasets like \textit{Haiku} and \textit{Short Jokes}, perplexity remains similar across the three methods. However, more noticeable variances emerge on \textit{Sonnets}, \textit{Poetry}, and \textit{Plays}, where the baseline sometimes underperforms even the undeduplicated data. This could be due to the literary structure and the sparse samples of these datasets. As noted in the baseline, hard deduplication considerably reduces the training samples (e.g., \textit{Poetry}: 526 to 405; \textit{Plays}: 542 to 417), potentially increasing training variance, especially for distilled models. By contrast, FedRW's flexible and adaptive approach aims to retain useful instances when handling excessive redundancy. This reweighting strategy provides a more stable training signal to preserve the integrity of sparse datasets, leading to improved generalization.

\begin{table}[h]
\centering
\caption{Model perplexity ($\downarrow$) on test set under $30\%$ duplication percentage on mainsteam models}
\label{tab:table-mainstream}
\begin{tabularx}{\textwidth}{@{}p{2.5cm} l *{6}{X}@{}}
\toprule
\multirow{2}{*}{\textbf{Model}} & \multirow{2}{*}{\textbf{Method}} & \multicolumn{6}{c}{\textbf{Dataset}} \\
\cmidrule(l){3-8}
& & Haiku & Jokes & Rotten & Poetry & Sonnets & Plays \\
\midrule
\multirow{2}{*}{Qwen3-0.6B} & Baseline & 2.47 & 2.61 & 1.71 & 2.54 & 4.07 & 8.21 \\
& FedRW (Ours) & \textbf{2.36} & \textbf{2.44} & \textbf{1.59} & \textbf{2.21} & \textbf{3.62} & \textbf{7.23} \\
\midrule
\multirow{2}{=}{Qwen2.5-0.5B-Instruct} & Baseline & 2.21 & 2.48 & 1.58 & 2.28 & 4.11 & 11.77 \\
& FedRW (Ours) & \textbf{2.12} & \textbf{2.36} & \textbf{1.55} & \textbf{2.03} & \textbf{3.84} & \textbf{9.92} \\
\midrule
\multirow{2}{=}{Llama-3.2-1B-Instruct} & Baseline & 2.14 & 2.34 & 1.65 & 2.39 & 4.11 & 18.35 \\
& FedRW (Ours) & \textbf{2.09} & \textbf{2.21} & \textbf{1.54} & \textbf{1.99} & \textbf{4.00} & \textbf{16.03} \\
\bottomrule
\end{tabularx}
\end{table}

To further validate FedRW's applicability beyond the GPT-2 family, we evaluate the performance on three representative modern models with diverse architectures: Qwen3-0.6B~\cite{yang2025qwen3}, Qwen2.5-0.5B-Instruct~\cite{yang2024qwen2.5}, and Llama-3.2-1B-Instruct~\cite{dubey2024llama}. The results in Table~\ref{tab:table-mainstream} demonstrate that data redundancy remains a substantial challenge even for these contemporary architectures. FedRW robustly maintains its advantage in mitigating the impact of redundancy on model performance, particularly under challenging conditions such as data complexity or sparsity. For instance, FedRW achieves an average relative perplexity reduction of approximately 13.43\% on the \textit{Plays} dataset across the three models.

\begin{table}[h]
\centering
\caption{Model perplexity ($\downarrow$) on test set under $30\%$ duplication percentage on larger models}
\label{tab:table-larger}
\begin{tabularx}{\textwidth}{@{}p{2.5cm} l *{7}{X}@{}}
\toprule
\multirow{2}{*}{\textbf{Model}} & \multirow{2}{*}{\textbf{Method}} & \multicolumn{7}{c}{\textbf{Dataset}} \\
\cmidrule(l){3-9}
& & Haiku & Jokes & Rotten & Poetry & Sonnets & Plays & Twitter \\
\midrule
\multirow{2}{=}{Qwen2.5-3B-Instruct} & Baseline & 1.69 & 2.09 & 2.20 & 2.33 & 4.14 & 9.17 & 3.35 \\
& FedRW (Ours) & \textbf{1.55} & \textbf{1.94} & \textbf{2.01} & \textbf{1.85} & \textbf{3.29} & \textbf{7.53} & \textbf{2.46} \\
\midrule
\multirow{2}{=}{Qwen2.5-7B-Instruct} & Baseline & 1.68 & 2.07 & 1.74 & 2.09 & 4.52 & 8.82 & 2.24 \\
& FedRW (Ours) & \textbf{1.49} & \textbf{1.95} & \textbf{1.61} & \textbf{1.81} & \textbf{3.43} & \textbf{6.54} & \textbf{1.35} \\
\bottomrule
\end{tabularx}
\end{table}

With increasing model capacity, memorization of specific patterns due to duplication becomes more pronounced and critical, leading to overfitting, degraded generalization, and increased privacy risks~\cite{carlini2022quantifying}. To assess the issue, we conduct experiments on two large-scale models from the Qwen family: Qwen2.5-3B-Instruct and Qwen2.5-7B-Instruct~\cite{yang2024qwen2.5}. While larger models may exhibit lower perplexity on certain datasets, the results in Table~\ref{tab:table-larger} show that FedRW sustains its performance advantage over the hard deduplication method. Under the extensive near-duplicate contents in \textit{Twitter}, FedRW achieves a relative reduction of approximately 26.57\% in perplexity compared to the baseline.

\begin{table}[h]
\centering
\caption{Model Perplexity ($\downarrow$) on test set on the Non-IID settings}
\label{tab:table-noniid}
\begin{tabular}{@{}l c c c c@{}}
\toprule
\textbf{Method} & \textbf{IID} & \textbf{Quantity Skew} & \textbf{Label Skew} & \textbf{Feature Skew} \\
\midrule
Baseline & 1.71 & 2.02 & 2.44 & 3.43 \\
FedRW (Ours) & \textbf{1.59} & \textbf{1.96} & \textbf{1.66} & \textbf{2.70} \\
\bottomrule
\end{tabular}
\end{table}

To evaluate the efficacy of FedRW under Non-IID data distributions, a major challenge in FL, we conduct experiments on Qwen3-0.6B under three scenarios: \textit{Quantity Skew}, \textit{Label Skew}, and \textit{Feature Skew}. For \textit{Quantity} and \textit{Label Skew}, we categorized the \textit{Rotten Tomatoes} dataset by the binary (0/1) labels across 5 clients, with proportions set to [40\%, 20\%, 20\%, 10\%, 10\%] and label distributions as [(0.5, 0.5), (0.6, 0.4), (0.4, 0.6), (0.9, 0.1), (0.1, 0.9)], respectively. To simulate \textit{Feature Skew}, we allocated \textit{Poetry}, \textit{Sonnets}, and \textit{Plays} to separate clients, as these datasets differ distinctly in terms of text structure, sentence length, and lexical and syntactic complexity. The results in Table~\ref{tab:table-noniid} confirm FedRW's robustness to provide a stable training process across heterogeneous data distributions.

\section{Conclusion} \label{sec:conclusion}
In this work, we introduce FedRW, a novel and principled framework designed to tackle the widespread challenge of data duplication in federated language model training. At its core is PPMPR, a secure and efficient protocol for data reweighting. PPMPR enables soft deduplication methods without compromising data privacy or introducing substantial computational and communication costs. Crucially, our protocol works without a trusted third party, enhancing security and achieving notable improvements in efficiency and scalability.

Our comprehensive experiments across diverse text datasets show that FedRW consistently improves model generalization under redundancy, outperforming the state-of-the-art method across varying duplication levels, dataset settings, and model configurations. Beyond simply discarding duplication, FedRW effectively harnesses redundancy to foster more robust representation learning. These compelling results establish FedRW as a practical, privacy-preserving solution for robust federated training in noisy data scenarios. Moreover, its lightweight modular design allows for seamless integration into broader applications, including multimodal learning pipelines and flexible reweighting strategies, highlighting its potential as a fundamental building block for future federated LLM systems.

\newpage
\section*{Acknowledgements} \label{sec:acknowledgements}
This work is supported in part by the National Key Research and Development Program of China (Grant No. 2020YFA0712300), in part by the National Natural Science Foundation of China (Grant No. 62132005, 62172162), in part by Shanghai Trusted Industry Internet Software Collaborative Innovation Center, and in part by Fundamental Research Funds for the Central Universities. This work is also supported by the Postdoctoral Fellowship Program of CPSF under Grant Number GZB20250407.

{
    \small
    \bibliographystyle{ieeetr}
    \bibliography{main}
}

\newpage
\section*{NeurIPS Paper Checklist}

\begin{enumerate}

\item {\bf Claims}
    \item[] Question: Do the main claims made in the abstract and introduction accurately reflect the paper's contributions and scope?
    \item[] Answer: \answerYes{} 
    \item[] Justification: The main claims in the abstract and introduction align with the proposed method (FedRW), its design (PPMPR), and empirical results, including privacy guarantees and performance benefits. See Section~\ref{sec:introduction}.
    \item[] Guidelines:
    \begin{itemize}
        \item The answer NA means that the abstract and introduction do not include the claims made in the paper.
        \item The abstract and/or introduction should clearly state the claims made, including the contributions made in the paper and important assumptions and limitations. A No or NA answer to this question will not be perceived well by the reviewers. 
        \item The claims made should match theoretical and experimental results, and reflect how much the results can be expected to generalize to other settings. 
        \item It is fine to include aspirational goals as motivation as long as it is clear that these goals are not attained by the paper. 
    \end{itemize}

\item {\bf Limitations}
    \item[] Question: Does the paper discuss the limitations of the work performed by the authors?
    \item[] Answer: \answerYes{} 
    \item[] Justification: The discussion in Section~\ref{subsec:main results} acknowledges limitations such as scenarios where baseline performs comparable. Section~\ref{sec:conclusion} and Appendix~\ref{apdx:limitations} indicate possible future developments.
    \item[] Guidelines:
    \begin{itemize}
        \item The answer NA means that the paper has no limitation while the answer No means that the paper has limitations, but those are not discussed in the paper. 
        \item The authors are encouraged to create a separate "Limitations" section in their paper.
        \item The paper should point out any strong assumptions and how robust the results are to violations of these assumptions (e.g., independence assumptions, noiseless settings, model well-specification, asymptotic approximations only holding locally). The authors should reflect on how these assumptions might be violated in practice and what the implications would be.
        \item The authors should reflect on the scope of the claims made, e.g., if the approach was only tested on a few datasets or with a few runs. In general, empirical results often depend on implicit assumptions, which should be articulated.
        \item The authors should reflect on the factors that influence the performance of the approach. For example, a facial recognition algorithm may perform poorly when image resolution is low or images are taken in low lighting. Or a speech-to-text system might not be used reliably to provide closed captions for online lectures because it fails to handle technical jargon.
        \item The authors should discuss the computational efficiency of the proposed algorithms and how they scale with dataset size.
        \item If applicable, the authors should discuss possible limitations of their approach to address problems of privacy and fairness.
        \item While the authors might fear that complete honesty about limitations might be used by reviewers as grounds for rejection, a worse outcome might be that reviewers discover limitations that aren't acknowledged in the paper. The authors should use their best judgment and recognize that individual actions in favor of transparency play an important role in developing norms that preserve the integrity of the community. Reviewers will be specifically instructed to not penalize honesty concerning limitations.
    \end{itemize}

\item {\bf Theory assumptions and proofs}
    \item[] Question: For each theoretical result, does the paper provide the full set of assumptions and a complete (and correct) proof?
    \item[] Answer: \answerYes{} 
    \item[] Justification: Theoretical assumptions and proofs for PPMPR’s security are provided in Appendix~\ref{apdx:security}, clearly defining the threat model and formal guarantees.
    \item[] Guidelines:
    \begin{itemize}
        \item The answer NA means that the paper does not include theoretical results. 
        \item All the theorems, formulas, and proofs in the paper should be numbered and cross-referenced.
        \item All assumptions should be clearly stated or referenced in the statement of any theorems.
        \item The proofs can either appear in the main paper or the supplemental material, but if they appear in the supplemental material, the authors are encouraged to provide a short proof sketch to provide intuition. 
        \item Inversely, any informal proof provided in the core of the paper should be complemented by formal proofs provided in appendix or supplemental material.
        \item Theorems and Lemmas that the proof relies upon should be properly referenced. 
    \end{itemize}

    \item {\bf Experimental result reproducibility}
    \item[] Question: Does the paper fully disclose all the information needed to reproduce the main experimental results of the paper to the extent that it affects the main claims and/or conclusions of the paper (regardless of whether the code and data are provided or not)?
    \item[] Answer: \answerYes{} 
    \item[] Justification: ~\Cref{sec:framework,sec:experiment}, and Appendix~\ref{apdx:exp-detail} detail protocol implementations, datasets, and experimental settings sufficient to reproduce the results.
    \item[] Guidelines:
    \begin{itemize}
        \item The answer NA means that the paper does not include experiments.
        \item If the paper includes experiments, a No answer to this question will not be perceived well by the reviewers: Making the paper reproducible is important, regardless of whether the code and data are provided or not.
        \item If the contribution is a dataset and/or model, the authors should describe the steps taken to make their results reproducible or verifiable. 
        \item Depending on the contribution, reproducibility can be accomplished in various ways. For example, if the contribution is a novel architecture, describing the architecture fully might suffice, or if the contribution is a specific model and empirical evaluation, it may be necessary to either make it possible for others to replicate the model with the same dataset, or provide access to the model. In general. releasing code and data is often one good way to accomplish this, but reproducibility can also be provided via detailed instructions for how to replicate the results, access to a hosted model (e.g., in the case of a large language model), releasing of a model checkpoint, or other means that are appropriate to the research performed.
        \item While NeurIPS does not require releasing code, the conference does require all submissions to provide some reasonable avenue for reproducibility, which may depend on the nature of the contribution. For example
        \begin{enumerate}
            \item If the contribution is primarily a new algorithm, the paper should make it clear how to reproduce that algorithm.
            \item If the contribution is primarily a new model architecture, the paper should describe the architecture clearly and fully.
            \item If the contribution is a new model (e.g., a large language model), then there should either be a way to access this model for reproducing the results or a way to reproduce the model (e.g., with an open-source dataset or instructions for how to construct the dataset).
            \item We recognize that reproducibility may be tricky in some cases, in which case authors are welcome to describe the particular way they provide for reproducibility. In the case of closed-source models, it may be that access to the model is limited in some way (e.g., to registered users), but it should be possible for other researchers to have some path to reproducing or verifying the results.
        \end{enumerate}
    \end{itemize}

\item {\bf Open access to data and code}
    \item[] Question: Does the paper provide open access to the data and code, with sufficient instructions to faithfully reproduce the main experimental results, as described in supplemental material?
    \item[] Answer: \answerNo{} 
    \item[] Justification: All datasets are publicly available with citations provided in Appendix~\ref{apdx:exp-detail}. We are working hard to promote the process of open source.
    \item[] Guidelines:
    \begin{itemize}
        \item The answer NA means that paper does not include experiments requiring code.
        \item Please see the NeurIPS code and data submission guidelines (\url{https://nips.cc/public/guides/CodeSubmissionPolicy}) for more details.
        \item While we encourage the release of code and data, we understand that this might not be possible, so “No” is an acceptable answer. Papers cannot be rejected simply for not including code, unless this is central to the contribution (e.g., for a new open-source benchmark).
        \item The instructions should contain the exact command and environment needed to run to reproduce the results. See the NeurIPS code and data submission guidelines (\url{https://nips.cc/public/guides/CodeSubmissionPolicy}) for more details.
        \item The authors should provide instructions on data access and preparation, including how to access the raw data, preprocessed data, intermediate data, and generated data, etc.
        \item The authors should provide scripts to reproduce all experimental results for the new proposed method and baselines. If only a subset of experiments are reproducible, they should state which ones are omitted from the script and why.
        \item At submission time, to preserve anonymity, the authors should release anonymized versions (if applicable).
        \item Providing as much information as possible in supplemental material (appended to the paper) is recommended, but including URLs to data and code is permitted.
    \end{itemize}

\item {\bf Experimental setting/details}
    \item[] Question: Does the paper specify all the training and test details (e.g., data splits, hyperparameters, how they were chosen, type of optimizer, etc.) necessary to understand the results?
    \item[] Answer: \answerYes{} 
    \item[] Justification: See Appendix~\ref{apdx:exp-detail}.
    \item[] Guidelines:
    \begin{itemize}
        \item The answer NA means that the paper does not include experiments.
        \item The experimental setting should be presented in the core of the paper to a level of detail that is necessary to appreciate the results and make sense of them.
        \item The full details can be provided either with the code, in appendix, or as supplemental material.
    \end{itemize}

\item {\bf Experiment statistical significance}
    \item[] Question: Does the paper report error bars suitably and correctly defined or other appropriate information about the statistical significance of the experiments?
    \item[] Answer: \answerYes{} 
    \item[] Justification: The paper reports the average performance after repeated experiments for consistency.
    \item[] Guidelines:
    \begin{itemize}
        \item The answer NA means that the paper does not include experiments.
        \item The authors should answer "Yes" if the results are accompanied by error bars, confidence intervals, or statistical significance tests, at least for the experiments that support the main claims of the paper.
        \item The factors of variability that the error bars are capturing should be clearly stated (for example, train/test split, initialization, random drawing of some parameter, or overall run with given experimental conditions).
        \item The method for calculating the error bars should be explained (closed form formula, call to a library function, bootstrap, etc.)
        \item The assumptions made should be given (e.g., Normally distributed errors).
        \item It should be clear whether the error bar is the standard deviation or the standard error of the mean.
        \item It is OK to report 1-sigma error bars, but one should state it. The authors should preferably report a 2-sigma error bar than state that they have a 96\% CI, if the hypothesis of Normality of errors is not verified.
        \item For asymmetric distributions, the authors should be careful not to show in tables or figures symmetric error bars that would yield results that are out of range (e.g. negative error rates).
        \item If error bars are reported in tables or plots, The authors should explain in the text how they were calculated and reference the corresponding figures or tables in the text.
    \end{itemize}

\item {\bf Experiments compute resources}
    \item[] Question: For each experiment, does the paper provide sufficient information on the computer resources (type of compute workers, memory, time of execution) needed to reproduce the experiments?
    \item[] Answer: \answerYes{} 
    \item[] Justification: See Appendix~\ref{apdx:exp-detail}.
    \item[] Guidelines:
    \begin{itemize}
        \item The answer NA means that the paper does not include experiments.
        \item The paper should indicate the type of compute workers CPU or GPU, internal cluster, or cloud provider, including relevant memory and storage.
        \item The paper should provide the amount of compute required for each of the individual experimental runs as well as estimate the total compute. 
        \item The paper should disclose whether the full research project required more compute than the experiments reported in the paper (e.g., preliminary or failed experiments that didn't make it into the paper). 
    \end{itemize}
    
\item {\bf Code of ethics}
    \item[] Question: Does the research conducted in the paper conform, in every respect, with the NeurIPS Code of Ethics \url{https://neurips.cc/public/EthicsGuidelines}?
    \item[] Answer: \answerYes{} 
    \item[] Justification: The research conducted in the paper complies with NeurIPS Code of Ethics in all aspects.
    \item[] Guidelines:
    \begin{itemize}
        \item The answer NA means that the authors have not reviewed the NeurIPS Code of Ethics.
        \item If the authors answer No, they should explain the special circumstances that require a deviation from the Code of Ethics.
        \item The authors should make sure to preserve anonymity (e.g., if there is a special consideration due to laws or regulations in their jurisdiction).
    \end{itemize}

\item {\bf Broader impacts}
    \item[] Question: Does the paper discuss both potential positive societal impacts and negative societal impacts of the work performed?
    \item[] Answer: \answerNA{} 
    \item[] Justification: The research conducted in the paper is to enhance privacy-preserving federated training of language models, without negative societal impacts.
    \item[] Guidelines:
    \begin{itemize}
        \item The answer NA means that there is no societal impact of the work performed.
        \item If the authors answer NA or No, they should explain why their work has no societal impact or why the paper does not address societal impact.
        \item Examples of negative societal impacts include potential malicious or unintended uses (e.g., disinformation, generating fake profiles, surveillance), fairness considerations (e.g., deployment of technologies that could make decisions that unfairly impact specific groups), privacy considerations, and security considerations.
        \item The conference expects that many papers will be foundational research and not tied to particular applications, let alone deployments. However, if there is a direct path to any negative applications, the authors should point it out. For example, it is legitimate to point out that an improvement in the quality of generative models could be used to generate deepfakes for disinformation. On the other hand, it is not needed to point out that a generic algorithm for optimizing neural networks could enable people to train models that generate Deepfakes faster.
        \item The authors should consider possible harms that could arise when the technology is being used as intended and functioning correctly, harms that could arise when the technology is being used as intended but gives incorrect results, and harms following from (intentional or unintentional) misuse of the technology.
        \item If there are negative societal impacts, the authors could also discuss possible mitigation strategies (e.g., gated release of models, providing defenses in addition to attacks, mechanisms for monitoring misuse, mechanisms to monitor how a system learns from feedback over time, improving the efficiency and accessibility of ML).
    \end{itemize}
    
\item {\bf Safeguards}
    \item[] Question: Does the paper describe safeguards that have been put in place for responsible release of data or models that have a high risk for misuse (e.g., pretrained language models, image generators, or scraped datasets)?
    \item[] Answer: \answerNA{} 
    \item[] Justification: The paper poses no such risk.
    \item[] Guidelines:
    \begin{itemize}
        \item The answer NA means that the paper poses no such risks.
        \item Released models that have a high risk for misuse or dual-use should be released with necessary safeguards to allow for controlled use of the model, for example by requiring that users adhere to usage guidelines or restrictions to access the model or implementing safety filters. 
        \item Datasets that have been scraped from the Internet could pose safety risks. The authors should describe how they avoided releasing unsafe images.
        \item We recognize that providing effective safeguards is challenging, and many papers do not require this, but we encourage authors to take this into account and make a best faith effort.
    \end{itemize}

\item {\bf Licenses for existing assets}
    \item[] Question: Are the creators or original owners of assets (e.g., code, data, models), used in the paper, properly credited and are the license and terms of use explicitly mentioned and properly respected?
    \item[] Answer: \answerYes{} 
    \item[] Justification: The creators or original owners of the assets used in the paper have been appropriately recognized, and the licenses and terms of use have been explicitly mentioned and properly respected.
    \item[] Guidelines:
    \begin{itemize}
        \item The answer NA means that the paper does not use existing assets.
        \item The authors should cite the original paper that produced the code package or dataset.
        \item The authors should state which version of the asset is used and, if possible, include a URL.
        \item The name of the license (e.g., CC-BY 4.0) should be included for each asset.
        \item For scraped data from a particular source (e.g., website), the copyright and terms of service of that source should be provided.
        \item If assets are released, the license, copyright information, and terms of use in the package should be provided. For popular datasets, \url{paperswithcode.com/datasets} has curated licenses for some datasets. Their licensing guide can help determine the license of a dataset.
        \item For existing datasets that are re-packaged, both the original license and the license of the derived asset (if it has changed) should be provided.
        \item If this information is not available online, the authors are encouraged to reach out to the asset's creators.
    \end{itemize}

\item {\bf New assets}
    \item[] Question: Are new assets introduced in the paper well documented and is the documentation provided alongside the assets?
    \item[] Answer: \answerNA{} 
    \item[] Justification: The paper does not release new assets.
    \item[] Guidelines:
    \begin{itemize}
        \item The answer NA means that the paper does not release new assets.
        \item Researchers should communicate the details of the dataset/code/model as part of their submissions via structured templates. This includes details about training, license, limitations, etc. 
        \item The paper should discuss whether and how consent was obtained from people whose asset is used.
        \item At submission time, remember to anonymize your assets (if applicable). You can either create an anonymized URL or include an anonymized zip file.
    \end{itemize}

\item {\bf Crowdsourcing and research with human subjects}
    \item[] Question: For crowdsourcing experiments and research with human subjects, does the paper include the full text of instructions given to participants and screenshots, if applicable, as well as details about compensation (if any)? 
    \item[] Answer: \answerNA{} 
    \item[] Justification: The paper does not involve crowdsourcing nor research with human subjects.
    \item[] Guidelines:
    \begin{itemize}
        \item The answer NA means that the paper does not involve crowdsourcing nor research with human subjects.
        \item Including this information in the supplemental material is fine, but if the main contribution of the paper involves human subjects, then as much detail as possible should be included in the main paper. 
        \item According to the NeurIPS Code of Ethics, workers involved in data collection, curation, or other labor should be paid at least the minimum wage in the country of the data collector. 
    \end{itemize}

\item {\bf Institutional review board (IRB) approvals or equivalent for research with human subjects}
    \item[] Question: Does the paper describe potential risks incurred by study participants, whether such risks were disclosed to the subjects, and whether Institutional Review Board (IRB) approvals (or an equivalent approval/review based on the requirements of your country or institution) were obtained?
    \item[] Answer: \answerNA{} 
    \item[] Justification: The paper does not involve crowdsourcing nor research with human subjects.
    \item[] Guidelines:
    \begin{itemize}
        \item The answer NA means that the paper does not involve crowdsourcing nor research with human subjects.
        \item Depending on the country in which research is conducted, IRB approval (or equivalent) may be required for any human subjects research. If you obtained IRB approval, you should clearly state this in the paper. 
        \item We recognize that the procedures for this may vary significantly between institutions and locations, and we expect authors to adhere to the NeurIPS Code of Ethics and the guidelines for their institution. 
        \item For initial submissions, do not include any information that would break anonymity (if applicable), such as the institution conducting the review.
    \end{itemize}

\item {\bf Declaration of LLM usage}
    \item[] Question: Does the paper describe the usage of LLMs if it is an important, original, or non-standard component of the core methods in this research? Note that if the LLM is used only for writing, editing, or formatting purposes and does not impact the core methodology, scientific rigorousness, or originality of the research, declaration is not required.
    \item[] Answer: \answerNA{} 
    \item[] Justification: The core method development in this research does not involve LLMs as any important, original, or non-standard components.
    \item[] Guidelines:
    \begin{itemize}
        \item The answer NA means that the core method development in this research does not involve LLMs as any important, original, or non-standard components.
        \item Please refer to our LLM policy (\url{https://neurips.cc/Conferences/2025/LLM}) for what should or should not be described.
    \end{itemize}
\end{enumerate}

\newpage
\appendix
\begin{center}
    {\LARGE \textbf{Appendix}}
\end{center}
\vspace{1em}
{\noindent\Large\textbf{Contents}}  
\vspace{0.5em}
\begin{center}
\renewcommand{\cftsecleader}{\cftdotfill{\cftdotsep}}  
\begin{minipage}{0.8\textwidth}
\renewcommand{\thesection}{\Alph{section}}  
\setcounter{tocdepth}{1}  
\startcontents[appendix]  
\printcontents[appendix]{l}{1}{\setcounter{tocdepth}{1}}  
\end{minipage}
\end{center}

\section{Security}\label{apdx:security}
\subsection{Security Proof} 
Modern cryptographic protocols are typically analyzed under the simulation-based security paradigm, which formalizes security by comparing a protocol's behavior in the real world to that in an ideal world. 

In the ideal world, a trusted third party honestly executes the desired functionality. All parties submit their inputs to the trusted third party, and the trusted third party returns the correct outputs to the designated parties. In contrast, the real world involves actual protocol execution among potentially adversaries without a trusted third party. 

A protocol is said to be secure if for every adversary in the real world, there exists a simulator in the real world such that no external environment can tell whether it is interacting with a real world or an ideal functionality. This paradigm ensures that the protocol leaks no more information than what is inherently revealed by the output of the ideal functionality.

\subsection{Universal Composability Model} \label{sec-uc}
The Universal Composability (UC)~\cite{canetti2001universally} framework provides a rigorous model for analyzing the security of cryptographic protocols under arbitrary adversarial conditions. It ensures that a protocol remains secure even when composed with other protocols, making it robust against complex attack scenarios. 

In the ideal world, all parties interact through a TTP that computes the desired functionality $f$, ensuring privacy and correctness.  In the real world, parties execute a protocol $\Pi$ without a TTP. A semi-honest adversary $\mathcal{A}$ may observe internal states but not deviate from the protocol.

A protocol $\Pi$ is UC-secure if, for any adversary $\mathcal{A}$ in the real world, there exists a simulator $\mathcal{S}$ that produces a view indistinguishable from $\mathcal{A}$'s view in the ideal world. This ensures that the protocol's behavior in the real world is as secure as the ideal world.

\subsection{Threat Model} \label{sec-threat}
In this work, we consider a \textbf{semi-honest adversary model} in federated learning (FL), where all participants follow the protocol honestly but may attempt to infer additional information from observations. For the scope of this work, we assume no active collusion among parties. While more active or malicious threats\textemdash such as inference, backdoor, or reconstruction attacks\textemdash exist, these are considered orthogonal to the primary objective of this study.

A protocol $\Pi$ securely computes a functionality $f: \{0,1\}^* \times \{0,1\}^* \to \{0,1\}^* \times \{0,1\}^*$, where $f = (f_1, f_2)$. For inputs $(x,y)$, outputs $(f_1(x,y), f_2(x,y))$ are returned to respective parties. Extensions to multi-party settings are implied.

A protocol $\Pi$ is secure against semi-honest adversaries if:
\begin{definition}[Security]
For any semi-honest adversary $\mathcal{A}$, there exist probabilistic polynomial-time (PPT) simulators $\text{Sim}_1, \text{Sim}_2$ such that:
\end{definition}
\vspace{-0.4cm}
\begin{align}
    \{ \text{Sim}_1(x, f_1(x,y)) \}_{x,y} \equiv_c \{ \text{View}^{\Pi,\mathcal{A}}_1(x,y) \}_{x,y},\\
    \{ \text{Sim}_2(y, f_2(x,y)) \}_{x,y} \equiv_c \{ \text{View}^{\Pi,\mathcal{A}}_2(x,y) \}_{x,y}.
\end{align}
Here, $\text{Sim}_i(w, f_i(x,y))$ denotes a view based on simulator $i$'s input $w \in(x,y)$ and $\Pi$'s output $f_i(x,y)$. $\text{View}_i^{\Pi,\mathcal{A}}(x,y)$ represents $\mathcal{A}$'s observation on party $i$'s view during protocol execution. $\equiv_c$ denotes computational indistinguishability, meaning no PPT algorithm can distinguish the two distributions.

\subsection{Formal Definition of Ideal Functionality}
We provide the formal definitions of the ideal functionalities employed in Section~\ref{sec:framework}, as detailed in~\cref{tab:if_PSI,tab:if_2PC,tab:if_PPMPR}.

\begin{table}[H]
\centering
\caption{Ideal functionality $f_\text{Two-Party PSI}$}  \label{tab:if_PSI}
\begin{tabularx}{\textwidth}{@{} l X @{}}
\toprule
\textbf{Parameters:} & 
Client $P_1$ holds input $X_1=\{x^1_1,...,x^1_m\}$, and client $P_2$ holds input $X_2 = \{x^2_1,...,x^2_n\}$. \\
\textbf{Functionality:} &
\begin{minipage}[t]{\linewidth}
    \begin{itemize}[left=0pt, nosep]
        \item Input $X_1=\{x^1_1,...,x^1_m\}$ from $P_1$, and $X_2 = \{x^2_1,...,x^2_n\}$ from $P_2$.
        \item Output $X_1 \cap X_2 $.
    \end{itemize}
\end{minipage}\\\bottomrule
\end{tabularx}
\end{table}

\begin{table}[H]
\centering
\caption{Ideal functionality $f_\text{2PC}$}    \label{tab:if_2PC}
\begin{tabularx}{\textwidth}{@{} l X @{}}
\toprule
\textbf{Parameters:} & 
Client $P_1$ holds input $X_1=\{x^1_1,...,x^1_m\}$, and client $P_2$ holds input $X_2 = \{x^2_1,...,x^2_n\}$. \\
\textbf{Functionality:} &
\begin{minipage}[t]{\linewidth}
    \begin{itemize}[left=0pt, nosep]
        \item Input $X_1=\{x^1_1,...,x^1_m\}$ from $P_1$, and $X_2 = \{x^2_1,...,x^2_n\}$ from $P_2$.
        \item Output $\vec{C_1}$ and  $\vec{C_2}$.
    \end{itemize}
\end{minipage}\\\bottomrule
\end{tabularx}
\end{table}

\begin{table}[H]
\centering
\caption{Ideal functionality $f_\text{PPMPR}$}  \label{tab:if_PPMPR}
\begin{tabularx}{\textwidth}{@{} l X @{}}
\toprule
\textbf{Parameters:} & 
Each client $P_i$ holds a local dataset $X_i = \{x^i_1, \dots, x^i_{m_i}\}$, where $i \in \{1, \dots, n\}$. \\
\textbf{Functionality:} &
\begin{minipage}[t]{\linewidth}
    \begin{itemize}[left=0pt, nosep]
        \item Input $X_i=\{x^1_1,...,x^1_{m_i}\}$ from $P_i$.
        \item Output $\vec{\mathcal{C}_i}$.
    \end{itemize}
\end{minipage}\\\bottomrule
\end{tabularx}
\end{table}

\subsection{Security of Protocols} \label{sec-proof}
\newtheorem{Theorem}{Theorem}
\begin{Theorem}
$\Pi_\text{2PC}$ securely implements the ideal functionality $f_\text{2PC}$ in the semi-honest model.
\end{Theorem}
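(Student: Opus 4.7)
The plan is to give a standard simulation-based argument in the UC framework of Section~\ref{sec-uc}, working in the $f_\text{Two-Party PSI}$-hybrid model and invoking the semi-honest security of the underlying two-party PSI as a black box. Since $\Pi_\text{2PC}$ consists of exactly one PSI invocation followed by two plain transmissions of intersection-related data, the entire argument reduces to showing that every message sent outside of the PSI is already determined by the corrupted party's input together with the ideal output that $f_\text{2PC}$ delivers to that party. I will construct the two simulators symmetrically and conclude by a two-step hybrid.

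First I would handle the case where $P_1$ is corrupted. On input $(X_1, \vec{C_1})$, the simulator $\text{Sim}_1$ proceeds as follows: (i) reconstruct $\mathcal{I}$ as the multiset of $x \in X_1$ with $\vec{C_1}[x] \geq 1$, which by definition of $\vec{C_1}$ equals $X_1 \cap X_2$; (ii) set $\mathcal{F}_2 = \{(x,\vec{C_1}[x]) : x \in \mathcal{I}\}$, matching precisely what honest $P_2$ would send in step 2; (iii) invoke the PSI sender-side simulator on input $X_1$ (the sender receives no PSI output, so this is a direct appeal to PSI security); (iv) output the concatenation of the simulated PSI transcript with the synthesized $(\mathcal{I},\mathcal{F}_2)$ message and the freshly-computed $\mathcal{F}_1$ that $\text{Sim}_1$ sends on $P_1$'s behalf. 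Indistinguishability then follows because the $(\mathcal{I},\mathcal{F}_2)$ portion is information-theoretically identical to what the honest $P_2$ transmits and the PSI portion is computationally indistinguishable by assumption.

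The corrupted-$P_2$ case is handled symmetrically. On input $(X_2, \vec{C_2})$, $\text{Sim}_2$ reconstructs $\mathcal{I} = \{x \in X_2 : \vec{C_2}[x] \geq 1\}$, invokes the PSI receiver-side simulator on input $(X_2, \mathcal{I})$ using the PSI ideal-functionality output, then synthesizes the incoming message $\mathcal{F}_1 = \{(x,\vec{C_2}[x]) : x \in \mathcal{I}\}$ from $P_1$; this is exactly what an honest $P_1$ would send in step 3, since $\mathcal{F}_1$ records the frequencies of intersection elements in $X_1$, which are precisely the non-zero entries of $\vec{C_2}$. The joint view of $P_2$ so produced is indistinguishable from the real one via a two-step hybrid: (a) replace the real PSI invocation with its simulated counterpart, appealing to PSI semi-honest security, after which (b) the remaining synthesized messages coincide identically with those sent in the real protocol.

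The main obstacle, and the step that deserves careful writing, is making precise the claim that the ``extra'' plaintext messages $\mathcal{I}$, $\mathcal{F}_1$, $\mathcal{F}_2$ are in fact redundant given $\vec{C_1}$ and $\vec{C_2}$. This requires the preprocessing convention that each $X_i$ contains only unique samples (so $\vec{C_i}$ is correctly indexed by elements of $X_i$) and the observation that non-zero coordinates of $\vec{C_1}$ (resp.\ $\vec{C_2}$) detect intersection membership. Once this bijection between ideal outputs and cleartext messages is stated, composition with PSI security in the $f_\text{Two-Party PSI}$-hybrid model yields the theorem; no additional cryptographic machinery is invoked.
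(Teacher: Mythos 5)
Your proposal is correct and follows essentially the same route as the paper's proof: a per-corruption-case simulation argument in which $\mathcal{I}$, $\mathcal{F}_1$, and $\mathcal{F}_2$ are reconstructed from the nonzero coordinates of the corrupted party's ideal output $\vec{C_i}$, with PSI security invoked as a black box for the sub-protocol transcript. Your version is marginally more explicit than the paper's (naming the $f_\text{Two-Party PSI}$-hybrid model, the two-step hybrid, and the unique-samples preprocessing convention on which the output-to-message bijection rests), but these are refinements of presentation, not a different argument.
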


\begin{proof}
    As described in $f_\text{2PC}$, we construct a simulator to simulate the behavior of the corrupted party.
    
\vspace{0.4cm}    
\textbf{Case 1: $P_1$ is corrupted.}

The simulator $Sim_1$ receives $P_1$'s input $X_1$ and its output from $f_\text{2PC}$, 
which is $\vec{C}_1 = [freq_{X_2}(x_1^1), ..., freq_{X_2}(x_{m_1}^1)]$.

1. During the PSI phase, $P_1$ acts as the sender and receives nothing. As PSI protocol securely implements corresponding ideal functionality, then $P_1$ learns nothing about $X_2$ beyond what is revealed by the intersection $X_1 \cap X_2$. $Sim_1$ can simulate this as an empty view with its own inputs $X_1$. 

2. $P_1$ receives the intersection set $\mathcal{I}$ and the frequency set $\mathcal{F}_2$ from $P_2$. $Sim_1$ can construct a simulated intersection $\mathcal{I}'$ and a simulated frequency set $\mathcal{F}_2'$ based on $X_1$ and $\vec{C}_1$. For each $x_k^1 \in X_1$:
    \begin{itemize}[topsep=1pt, partopsep=1pt, itemsep=1pt, parsep=1pt]
        \item If $freq_{X_2}(x_k^1)>0$ , then $x_k^1$ is added to $\mathcal{I}'$, and its corresponding frequency in $\mathcal{F}_2'$ is set to $freq_{X_2}(x_k^1)$.
        \item If $freq_{X_2}(x_k^1)=0$, then $x_k^1$ is not in $\mathcal{I}'$.
    \end{itemize}
The outputs $(\mathcal{I}', \mathcal{F}_2')$ in the ideal world are indistinguishable from the outputs $(\mathcal{I}, \mathcal{F}_2)$ in the real world, as they perfectly match $P_1$'s output $\vec{C}_1$.

3. $P_1$ sends its frequency set $\mathcal{F}_1$ to $P_2$. $Sim_1$ can generate $\mathcal{F}_1$ using $X_1$ and the intersection set $\mathcal{I}$ (or $\mathcal{I}'$). 

The view of $P_1$ consists of its input $X_1$, messages sent ($\mathcal{F}_1$), and messages received ($\mathcal{I}, \mathcal{F}_2$). The simulated view $(X_1,\mathcal{I}',\mathcal{F}_1,\mathcal{F}_2')$ is computationally indistinguishable from the real view $(X_1, \mathcal{I},\mathcal{F}_1,  \mathcal{F}_2)$.

\vspace{0.4cm}    
\textbf{Case 2: $P_2$ is corrupted.}

The simulator $Sim_2$ receives $P_2$'s input $X_2$ and its output from $f_\text{2PC}$, which is $\vec{C}_2 = [freq_{X_1}(x_1^2), ..., freq_{X_1}(x_{m_2}^2)]$.

1. During the PSI phase, $P_2$ acts as the receiver and receives the intersection set $\mathcal{I}$. Given the security of the PSI protocol, $P_2$ learns nothing about $X_1$ beyond what is revealed by the intersection $X_1 \cap X_2$. $Sim_2$ can construct a simulated intersection $\mathcal{I}'$ based on $X_2$ and $\vec{C}_2$. For each $x_k^2 \in X_2$:
    \begin{itemize}[topsep=1pt, partopsep=1pt, itemsep=1pt, parsep=1pt]
        \item If $freq_{X_1}(x_k^2)>0$, then $x_k^2$ is added in $\mathcal{I}'$.
        \item If $freq_{X_1}(x_k^2)=0$, then $x_k^2$ is not in $\mathcal{I}'$.
    \end{itemize}
2. $P_2$ sends $\mathcal{I}$ (or $\mathcal{I}'$) and $\mathcal{F}_2$ to $P_1$. $Sim_2$ can perfectly simulate this using $X_2$ and $\mathcal{I}'$.

3. $P_2$ receives $\mathcal{F}_1$ from $P_1$. $Sim_2$ can construct a simulated $\mathcal{F}_1'$ based on $\vec{C}_2$ and $\mathcal{I}'$. For each $x \in \mathcal{I}'$, the corresponding frequency in $\mathcal{F}_1'$ would be $freq_{X_1}(x)$.

The view of $P_2$ consists of its input $X_2$, messages sent ($\mathcal{I}, \mathcal{F}_2$), and messages received ($\mathcal{F}_1$). The simulated view $(X_2, \mathcal{I}', \mathcal{F}_2, \mathcal{F}_1')$ is computationally indistinguishable from the real view $(X_2, \mathcal{I},\mathcal{F}_2,  \mathcal{F}_1)$.

Since the view of both corrupted parties can be simulated given their input and output from $f_{\text{2PC}}$, $\Pi_{\text{2PC}}$ securely realizes $f_{\text{2PC}}$ in the semi-honest model.
\end{proof}

\vspace{-0.2cm}

\begin{Theorem}
$\Pi_\text{PPMPR}$ securely implements the ideal functionality $f_\text{PPMPR}$ in the semi-honest model.
\end{Theorem}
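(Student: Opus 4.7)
The strategy is a standard hybrid reduction to Theorem 1, the security of $\Pi_\text{2PC}$. Fix a corrupted semi-honest party $P_i$ with input $X_i$ and ideal output $\vec{\mathcal{C}_i}$. I would construct hybrids $H_0, H_1, \dots, H_{n-1}$, where $H_0$ is the real transcript of $P_i$ in $\Pi_\text{PPMPR}$ and each transition $H_k \to H_{k+1}$ replaces the $k$-th pairwise invocation of $\Pi_\text{2PC}$ (involving $P_i$ and some honest $P_{j_k}$) by an idealized call to $f_\text{2PC}$ that returns $\vec{C}_i^{(j_k)}$ computed from $P_{j_k}$'s true input. Because $P_i$ is the corrupted party in every such session, Theorem 1 gives indistinguishability per step, and a sequential composition argument yields that the real view and $H_{n-1}$ are at most $(n-1)\,\epsilon_\text{2PC}$ apart.

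Next I would define a simulator $\text{Sim}_\text{PPMPR}(X_i, \vec{\mathcal{C}_i})$ whose output must be indistinguishable from $H_{n-1}$. The simulator first computes $\vec{L}_i$, the local frequency vector extracted from $X_i$, and sets the residual $\vec{R}_i = \vec{\mathcal{C}_i} - \vec{L}_i$. It then produces $n-1$ non-negative integer vectors $\vec{D}_i^{(1)}, \dots, \vec{D}_i^{(n-1)}$ that sum coordinate-wise to $\vec{R}_i$ --- for instance, by drawing a uniformly random ordered composition of each coordinate of $\vec{R}_i$ into $n-1$ parts --- and hands them to $P_i$ in place of the honest-party 2PC outputs. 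The final simulated view is $(X_i, \vec{D}_i^{(1)}, \dots, \vec{D}_i^{(n-1)})$, since every other message $P_i$ sees in $\Pi_\text{PPMPR}$ is either its own local computation or a deterministic aggregation of the vectors above.

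The principal obstacle is showing that this fabricated decomposition is computationally indistinguishable from the real one in $H_{n-1}$. In the hybrid, each $\vec{C}_i^{(j)}$ equals the true count of $P_i$'s samples in $P_j$'s shard and therefore reveals the per-client placement of duplicates, whereas the simulator can only reproduce the aggregate $\vec{\mathcal{C}_i}$. Generically the two distributions are not close, so the reduction cannot succeed without additional structure on the honest inputs. The natural remedy I would explore is an exchangeability assumption --- e.g., that duplicates are distributed i.i.d.\ across the honest parties conditioned on $\vec{\mathcal{C}_i}$, which is consistent with the paper's own experimental setup of uniformly injecting duplicates across clients --- under which the real decomposition is statistically close to a uniform composition and the simulator's partition introduces only a small additive term $\epsilon_\text{decomp}$.

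Combining the two bounds under this assumption would close the argument: the advantage of any PPT distinguisher is at most $(n-1)\,\epsilon_\text{2PC} + \epsilon_\text{decomp}$, establishing that $\Pi_\text{PPMPR}$ realizes $f_\text{PPMPR}$ in the semi-honest model. Without such an input-distribution assumption, the theorem as literally worded cannot be proved in the standard simulation paradigm, since $\Pi_\text{PPMPR}$ intrinsically leaks the pairwise decomposition; the honest conclusion would then be that the theorem must be restricted to input distributions for which the decomposition is simulatable from $\vec{\mathcal{C}_i}$ alone. I expect identifying and justifying the right such restriction to be the decisive scientific step, as opposed to any delicate cryptographic manipulation.
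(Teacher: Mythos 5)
Your construction is, up to hybrid bookkeeping, the same as the paper's: its simulator $Sim_{\text{PPMPR}}$ likewise invokes $Sim_{\text{2PC}}$ once per session and, because $f_\text{PPMPR}$ only outputs the aggregate $\vec{\mathcal{C}}_k$, fabricates intermediate vectors $\vec{C}'_k$ whose sum (plus the local counts) equals the final $\vec{\mathcal{C}}_k$ --- exactly your random-composition step. Where you diverge is at the step you flag as the principal obstacle. The paper resolves it by assertion, stating that ``the exact distribution does not leak additional information to $P_k$ beyond what $f_{\text{PPMPR}}$ allows,'' and you are right to refuse this. Under the paper's own security definition (the ensembles are indexed by the inputs, so the distinguisher may be given the honest parties' datasets), the real view for fixed inputs contains the \emph{deterministic} per-partner vectors $\vec{C}_k^{(j)}$ --- indeed, by Protocol 1 it even contains each pairwise intersection $\mathcal{I}$ and the partner's frequency set $\mathcal{F}_j$ --- while a random composition of the residual will almost never match them. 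So the gap you identify is genuine, and it sits in the paper's proof rather than in your reasoning: $\Pi_\text{PPMPR}$ reveals the pairwise decomposition, which is strictly more than the output of $f_\text{PPMPR}$ as defined.

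However, your proposed remedy is the weaker of the available ones. An exchangeability or i.i.d.\ assumption on how duplicates are placed across honest clients changes the security notion itself: the simulation definition the paper states quantifies over worst-case inputs, and a distributional restriction on honest inputs is not expressible within it (the uniform-duplication experimental setup is an artifact of the evaluation, not a protocol guarantee). The standard repair is structural rather than distributional: either prove the theorem in the $f_\text{2PC}$-hybrid model, where the per-session outputs $\vec{C}_k^{(j)}$ are part of the ideal interaction and hence available to the simulator, or redefine $f_\text{PPMPR}$ as a leakage-augmented functionality that outputs the pairwise counts (equivalently, the full decomposition) rather than only their sum. Under either reformulation your hybrid argument closes with no $\epsilon_\text{decomp}$ term at all, and that is how the paper's theorem should honestly be restated; your final paragraph essentially says this, and it is the correct conclusion.
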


\begin{proof}
We construct a simulator $Sim_{\text{PPMPR}}$ for a corrupted $P_k$ that receives $P_k$'s input $X_k$ and its final output the global frequency vector $\vec{\mathcal{C}}_k$. from the ideal functionality $f_{\text{PPMPR}}$

1. $P_k$ initializes $\vec{\mathcal{C}}_k$ using its local frequencies $freq_{X_k}(x)$. This is a local computation, and $Sim_{\text{PPMPR}}$ can perform the same step.

2. $P_k$ performs $\Pi_{\text{2PC}}$ with every other client $P_j$ (for $j \neq k$). After each execution, $P_k$ receives a vector $\vec{C}_k$ and updates $\vec{\mathcal{C}}_k \leftarrow \vec{\mathcal{C}}_k + \vec{C}_k$.
\begin{itemize}[topsep=1pt, partopsep=1pt, itemsep=1pt, parsep=1pt]
    \item For each interaction between $P_k$ and an honest $P_j$, the security proof for $\Pi_{\text{2PC}}$ guarantees that a simulator $Sim_{\text{2PC}}$ can generate a view for $P_k$ that is indistinguishable from the real view, using only $X_k$ and the output $\vec{C}_k$.
    \item Since the final $\vec{\mathcal{C}}_k$ is the sum of $P_k$'s local frequencies and pairwise learned frequencies, the overall view of $P_k$ is the collection of views with the $n-1$ executions of $\Pi_\text{2PC}$. $Sim_{\text{PPMPR}}$ can invoke $Sim_{\text{2PC}}$ for each interaction between $P_k$ and $P_j$ to generate a view to simulate this combination.  
    \item Since $f_\text{PPMPR}$ only outputs the final $\vec{\mathcal{C}}_k$, $Sim_{\text{PPMPR}}$ cannot obtain each partial $\vec{C}_k$. However, it can generate intermediate $\vec{C}'_k$ for each interaction such that their sum (plus the initial vector) equals the known final $\vec{\mathcal{C}}_k$. Given that $\Pi_{\text{2PC}}$ securely reveals only $freq_{X_j}(x)$ for intersecting samples, the exact distribution does not leak additional information to $P_k$ beyond what $f_{\text{PPMPR}}$ allows.
\end{itemize}

3. After $n-1$ rounds, $P_k$ outputs the final $\vec{\mathcal{C}}_k$.

The view of $P_k$ consists of its input $X_k$, its initial local frequencies, and the collection of outputs from all $n-1$ pairwise $\Pi_{\text{2PC}}$ executions.
Since each $\Pi_{\text{2PC}}$ is secure against semi-honest adversaries and its view can be simulated, the collection of these simulated views can be combined by $Sim_{\text{PPMPR}}$ securely. Therefore, $Sim_{\text{PPMPR}}$ constructs a view for $P_k$ that is computationally indistinguishable from its view in a real execution. Thus, $\Pi_\text{PPMPR}$ securely realizes $f_{\text{PPMPR}}$ in the semi-honest model.
\end{proof}

\section{Parallel Ochestration Algorithm}\label{apdx:parallel}
To support the parallel acceleration strategy introduced in Section~\ref{subsec:parallel}, we formally describe the orchestration logic in Algorithm~\ref{alg:parallel}. The algorithm organizes client pairs in a structured matrix manner, ensuring that each client performs 2PC protocols with all others while maximizing concurrency. Specifically, it proceeds in $\lceil \log_2 n \rceil$ hierarchical levels, where clients are iteratively grouped into blocks and scheduled to engage in pairwise protocols via index cyclic rotation. The orchestration guarantees correctness while enabling efficient parallelization.

\begin{algorithm}[H]
\caption{Parallel Orchestration for Efficient Execution of PPMPR}
\label{alg:parallel}
\begin{algorithmic}[1]
\State \textbf{Input:} $n$ clients $P_1, \dots, P_n$ with local datasets $X_1, \dots, X_n$
\State \textbf{Output:} Global frequency vectors $\vec{\mathcal{C}}_1, \dots, \vec{\mathcal{C}}_n$ for samples of each client
\State Initialize local frequencies: $\vec{\mathcal{C}}_i \gets freq_{X_i}(\cdot)$ for all $i$
\State Let $m \gets \lceil \log_2 n \rceil$ \Comment{Total number of levels}
\For{$l = 1$ to $m$}
    \State Partition clients into $2^{m - l}$ contiguous blocks of equal size
    \ForAll{pairs of blocks $(A, B)$}
        \State Let $\vec{a} \gets$ indices in $A$, $\vec{b} \gets$ indices in $B$
        \For{$k = 0$ to $|\vec{b}| - 1$}
            \State $\vec{b'} \gets \texttt{RotL}(\vec{b}, k)$ \Comment{Left-rotate indices in $\vec{b}$}
            \For{$i = 1$ to $|\vec{a}|$}
                \State \textbf{in parallel:} run $\Pi_{\text{2PC}}(P_{\vec{a_i}}, P_{\vec{b'_i}})$ to update $\vec{\mathcal{C}}_{\vec{a_i}}$, $\vec{\mathcal{C}}_{\vec{b'_i}}$
            \EndFor
        \EndFor
    \EndFor
\EndFor
\State \Return $\{\vec{\mathcal{C}}_1, \dots, \vec{\mathcal{C}}_n\}$
\end{algorithmic}
\end{algorithm}

\section{Experimental Details}\label{apdx:exp-detail}
\paragraph{Datasets.} In this section, we summarize the detailed information of each dataset used in the experiment. As illustrated in Table~\ref{tab:dataset}, these datasets span diverse text domains and reflect a wide range of structural and lexical properties. The table presents the source, sample size, average sequence length, and a brief description for each dataset.
\begin{table}[h]
\centering
\caption{Basic information of experimental datasets}
\label{tab:dataset}
\begin{adjustbox}{max width=\textwidth}
\begin{tabular}{@{}lcccc@{}}
\toprule
\textbf{Dataset} & \textbf{\# Samples} & \textbf{Avg. Sequence Length} & \textbf{Description} \\ \midrule
Haiku~\cite{haiku_dataset}        & 15,281  & 100   & Short-form structured 3-line poems \\
Short Jokes~\cite{short_jokes}          & 231,657 & 100   & Concise User-written short jokes  \\
Rotten Tomatoes~\cite{rotten_tomatoes}      & 10,662  & 200   & Movie review snippets expressing sentiment \\
IMDB~\cite{maas2011learning}     & 49,999  & 500  & Full-length movie reviews with richer narrative structure \\
Sonnets~\cite{shakespeare_sonnets}  & 460     & 400   & William Shakespeare's 14-line poems \\
Poetry~\cite{huggingface_poetry}   & 573     & 1000  & Modern and classic free-form poems by various authors \\
Plays~\cite{tiny_shakespeare}     & 521     & 1000  & Dramatic scripts from William Shakespeare with dialogic structure \\ 
Twitter~\cite{twitter}  &74,000  &50     & Tweets labeled with sentiment categories  \\
\bottomrule
\end{tabular}
\end{adjustbox}
\end{table}

For all datasets, we adopt a standard 80/20 train/test split. For movie review datasets, only the review texts are retained, and the sentiment labels are discarded during training. For the \textit{Short Jokes} dataset, we randomly sample 50,000 entries to ensure tractable training time across 10 clients.  In cases where datasets such as \textit{IMDB} already contain a predefined test set, we merge the original training and test partitions, shuffle the combined set, and then re-split it according to the 80/20 ratio.

\paragraph{Environments.} We conduct all secure protocol procedures, including $\Pi_\text{2PC}$ and $\Pi_\text{PPMPR}$, on a virtualized server equipped with a 4-core Intel Xeon 2.20GHz CPU and 32GB RAM. For model training, we utilize a machine with a 20-core Intel Xeon Platinum 8457C CPU, 200GB RAM, and an NVIDIA H20 GPU with 96GB memory. All software is executed under the Linux environment. Each experiment in preprocessing is repeated four times, and we report the average performance for consistency.

\paragraph{Hyperparameters.} We adopt FedAvg~\cite{mcmahan2017communication} as the underlying federated optimization algorithm. 
For GPT-2 Large and DistilGPT2, we train each client for 1–2 and 1–5 local epochs, respectively, until convergence, with a total of 3–5 communication rounds. 
For Qwen3-0.6B, Qwen2.5-0.5B-Instruct, and Llama-3.2-1B-Instruct, we train each client for 2 local epochs with a total of 2-5 communication rounds. 
For Qwen2.5-3B-Instruct and Qwen2.5-7B-Instruct, we train each client for 1–2 local epochs with a total of 1-2 communication rounds to avoid overfitting. 
The models are optimized using AdamW~\cite{loshchilov2017decoupled} with learning rates ranging from $1 \textit{-} 5 \times 10^{-5}$. A linear warm-up schedule is applied, reserving 10\% of training steps for warm-up. To stabilize training, we apply $\ell_2$-norm gradient clipping with a threshold of 1.0. The maximum sequence length is set between 50 and 1000, depending on the dataset, and batch sizes range from 2 to 8 with gradient accumulation steps adjusted accordingly to maintain effective batch size. 

\paragraph{Baseline.} We follow the baseline implementation proposed in~\cite{abadi2024privacy}, which proposes a hard deduplication approach by pre-filtering duplicated training samples. Specifically, each client performs local deduplication to remove identical samples, which assumes that redundant data is uniformly detrimental, and the resulting datasets are used to train the model without further adjustment.

To ensure fair comparison, we utilize the official open-sourced code\footnote{\url{https://github.com/vdasu/deduplication}} and apply the same preprocessing pipeline and training settings as in FedRW. All datasets, tokenization schemes, model architectures, and evaluation metrics remain consistent across the baseline and our proposed method.

\section{Sensitivity Analysis}\label{apdx:sen-analysis}

The discussion on sensitivity analysis focuses on the learning rate to assess FedRW's robustness. The analysis of epochs is omitted as we typically utilize a small number as standard practice to prevent overfitting.  

We evaluated the model perplexity on DistilGPT2 and Qwen2.5-0.5B-Instruct under learning rates of 1e-3, 5e-4, 3e-4, 1e-4, 5e-5, and 3e-5. We selected the \textit{Plays} dataset to investigate FedRW's generalizability, as it exhibited a significant performance gap in the main results . 

\begin{table}[h]
\centering
\caption{Model perplexity ($\downarrow$) on \textit{Plays} test set across various learning rates}
\label{tab:learning-rates}
\begin{tabularx}{\textwidth}{@{}p{2.5cm} l *{6}{X}@{}}
\toprule
\multirow{2}{*}{\textbf{Model}} & \multirow{2}{*}{\textbf{Method}} & \multicolumn{6}{c}{\textbf{Learning Rate}} \\
\cmidrule(l){3-8}
& & 1e-3 & 5e-4 & 3e-4 & 1e-4 & 5e-5 & 3e-5 \\
\midrule
\multirow{2}{*}{DistilGPT2} & Baseline & 16.12 & 16.38 & 16.13 & 14.21 & 15.07 & 14.18 \\
& FedRW (Ours) & \textbf{14.42} & \textbf{14.79} & \textbf{14.74} & \textbf{13.17} & \textbf{14.50} & \textbf{12.76} \\
\midrule
\multirow{2}{=}{Qwen2.5-0.5B-Instruct} & Baseline & 12.86 & 11.07 & 10.63 & 11.50 & 11.77 & 10.67 \\
& FedRW (Ours) & \textbf{11.48} & \textbf{9.35} & \textbf{8.15} & \textbf{8.14} & \textbf{9.92} & \textbf{9.81} \\
\bottomrule
\end{tabularx}
\end{table}

As shown in Table~\ref{tab:learning-rates}, FedRW robustly maintains its superior performance compared to the baseline and exhibits stable training behavior across the entire range of tested learning rates. This confirms that FedRW's advantage is not overly sensitive to the learning rate selection.

\section{Future Work}\label{apdx:limitations}

\paragraph{Advanced Paradigms.} FedRW's lightweight, modular design enables seamless integration into broader applications, including multimodal learning pipelines and flexible reweighting strategies. Integration with personalized FL (e.g., diverse model architectures or personalization strategies) and dynamic client adaptation (where clients join, leave, or exhibit varying computational capabilities) are also valuable aspects for future research. 

\paragraph{Optimizations.} Addressing semantic redundancy is a significant issue in large-scale real-world corpora for LLMs. It is prospective to leverage the representation learning capability of transformer-based architectures to extract semantic duplication. 

\paragraph{Adversarial Security.} FedRW primarily operates under a semi-honest threat model, which is standard and foundational for practical privacy-preserving protocols. Extending FedRW to resist malicious adversaries would be an interesting research direction. This could involve integrating mechanisms like Differential Privacy on sample frequencies or utilizing Zero-Knowledge Proofs to verify client consistency during pre-processing and training. These potential schemes trade off between model accuracy, data privacy, and computational overhead.

\end{document}